\documentclass[reqno]{amsart}
\usepackage{amssymb}
\usepackage{amscd,enumerate,amsfonts,calc,amsmath,verbatim, euscript}
\usepackage{enumerate}
\usepackage{tikz}
\usepackage{xcolor}
\usepackage[hidelinks]{hyperref}
\hypersetup{colorlinks,breaklinks,
             urlcolor=blue,
             linkcolor=blue}
\usepackage[pagewise]{lineno}

\newtheorem{thm}{{\bf Theorem}}[section]
\newtheorem{lemma}[thm]{{\bf Lemma}}
\newtheorem{rmk}[thm]{{\bf Remark}}
\newtheorem{prop}[thm]{{\bf Proposition}}
\newtheorem{cor}[thm]{{\bf Corollary}}

\newcommand{\Z}{\mathbb Z}
\newcommand{\Ls}{\EuScript{L}}
\newcommand{\Ie}{\EuScript{I}}

\newcommand{\RM}{\mathrm{RM}}

\def\1{{1\hskip-0.25em{\rm 1}}}

\def\CC{{\ \rlap{\raise 0.4ex 
\hbox{$\scriptscriptstyle |$}}\hskip -0.2em C}}

\def\G{\mathcal{G}}
\def\M{\mathcal{M}}

\def\Z{\mathbb{Z}}
\def\prm{\mathsf{PRM}}

\def\Fq{{\mathbb F}_q}

\def\PP{{\mathbb P}}

\newcommand{\Mod}[1]{\ (\mathrm{mod}\ #1)}

\newcommand{\Ev}{\operatorname{Ev}}
\newcommand{\supp}{\mathrm{Supp}}

\newcommand{\ssupp}{\mathrm{supp}}

\newcommand{\stirling}[2]{\genfrac{[}{]}{0pt}{}{#1}{#2}}

\begin{document}

\title[Projective Reed-Muller Codes]{On the Minimum Distance,  Minimum Weight Codewords, and the Dimension of Projective Reed-Muller Codes}
\thanks{This article is an expanded version of the talk given by the first named author at the conference \href{https://conferences.cirm-math.fr/2804.html}{ALCOCRYPT}, held at CIRM, Luminy, France, in February 2023. The talk is \href{https://library.cirm-math.fr/Record.htm?idlist=1&record=19391979124911191519}{available} in the Audiovisual Mathematics Library of CIRM and on \href{https://www.youtube.com/watch?v=5edjTvqu2Rc}{YouTube}.}
	\author{Sudhir R. Ghorpade}
	\address{Department of Mathematics, 
		Indian Institute of Technology Bombay,\newline \indent
		Powai, Mumbai 400076, India}
\email{\href{mailto:srg@math.iitb.ac.in}{srg@math.iitb.ac.in}}
\thanks{Sudhir Ghorpade is partially supported by  the grant DST/INT/RUS/RSF/P-41/2021 from the Department of Science \& Technology, Govt. of India, and the IRCC award grant 12IRAWD009 from IIT Bombay.}
	\author{Rati Ludhani}
	\address{Department of Mathematics, 
		Indian Institute of Technology Bombay,\newline \indent
		Powai, Mumbai 400076, India}
\email{\href{mailto:ratiludhani@gmail.com}{ratiludhani@gmail.com}}
\thanks{Rati Ludhani is supported by Prime Minister's Research Fellowship PMRF-192002-256 at IIT Bombay.}

\begin{abstract}
We give an alternative proof of the formula for the minimum distance of a projective Reed-Muller code of an arbitrary order. It leads to a complete characterization of the minimum weight codewords of a projective Reed-Muller code. This is then used to determine the number of  minimum weight codewords of a projective Reed-Muller code. Various formulas for the dimension of a  projective Reed-Muller code, and their equivalences are also discussed. 
%
\end{abstract}

\maketitle
\section{Introduction}
Reed-Muller codes constitute a widely studied and fairly well-understood class of linear codes. These codes were introduced, in the binary case, by Muller \cite{M}, and further studied by Reed \cite{R} in 1954. (See \cite{R2} for an account of the early history.) Generalizations of Reed-Muller codes to the $q$-ary case were considered and extensively studied in the following decade by Kasami, Lin, and Peterson \cite{KLP} and by Delsarte, Goethals, and MacWilliams \cite{DGM}. As a result of this, and also subsequent work, notably by Berger and Charpin \cite{BC} and by Heijnen and Pellikaan \cite{HP}, we know a great deal about $q$-ary Reed-Muller codes of an arbitrary order.
For instance, the length, dimension, minimum distance, characterization and enumeration of minimum weight codewords, duals, automorphisms, and generalized Hamming weights, are completely known. 
For convenience of reader and ease of reference, a brief summary of several of the known results is given in \S \ref{subsec:RM}. 

Projective Reed-Muller (PRM) codes in the general $q$-ary case, but of order $d<q$, were introduced by Lachaud \cite{L} in 1986. (See Section~3 of \cite{GRRT} for an account of the history.) The length and dimension as well as bounds for the minimum distance of PRM codes of order $d<q$, and the exact minimum distance in the case $d=2$ and $q=2^e$, $e\ge 2$, were obtained by Lachaud \cite{L}.  Some of these ideas and results go back to Manin's course of lectures during 1981-82 at Moscow State University,  and are sketched in  Chapter I, Section 3,  Example 2 b) of \cite{VM}. Tsfasman conjectured a sharp bound for the number of $\Fq$-rational points of a projective hypersurface of a given degree $d\le q$. This bound was proved in the affirmative by Serre \cite{Se}, and it was then shown by Lachaud \cite{L2} how the minimum distance of PRM codes of order $d\le q$ can be obtained as a consequence of Serre's inequality. Independently, S\o rensen \cite{S} gave explicit formulas for the length, dimension, and the minimum distance of PRM codes not just of order $d\le q$, but of an arbitrary order. Further, he determined the duals of PRM codes as well. S\o rensen's work has had a considerable impact as is indicated by the number of citations to \cite{S}. Moreover, PRM codes continue to be of considerable current interest, and while the automorphism group is known (cf. \cite{B}), the question about the determination of their generalized Hamming weights is still open in general even though there has been significant progress lately (see, e.g., \cite{BDG2} and references therein for the current state of the art).

Recently, while revisiting  S\o rensen's proof for the minimum distance of PRM codes of an arbitrary order, we noticed that there was a gap in the argument. While we realized later that the gap can be fixed and the result is correct as stated, we were able to obtain an alternative proof of S\o rensen's result by suitably adapting the arguments of Serre \cite{Se}.
An additional dividend of this alternative proof is that it  leads to a characterization of the minimum weight codewords of a projective Reed-Muller code. Since such a characterization is not given in S\o rensen's paper \cite{S} and it did not seem to be readily available in the literature, we initially thought that it was new. However, we later found that Rolland \cite[Lem. $2.3$]{RR} 
has already given a characterization of the minimum weight codewords of a PRM code using different methods; see also Ballet and Rolland \cite[Thm. 8]{BR} for a detailed proof of Rolland's result.  Nonetheless, we believe that the alternative proof given here and the resulting characterization is of some interest. Another noteworthy feature of our proof is the use of the notion of projective reduction that was developed and systematically used in \cite{BDG1}, and this seems quite natural in the context of PRM codes.  We also give an explicit formula for the number of minimum weight codewords of a PRM code of an arbitrary order.  As far as we know, this appears to be new and somewhat nontrivial, even after knowing a characterization of the 
minimum weight codewords.  Finally,  we review various formulas for the dimension of a PRM code of an arbitrary order and make some remarks on their equivalence. 
We also include a direct proof of one of the dimension formulas using the notion of projective reduction.  
An appendix reproduces a hitherto unpublished proof of M.  Quercia about the equivalence of two classical formulas for the dimension of a PRM code. 

\section{Basics of Reed-Muller  and Projective Reed-Muller Codes}
\label{sec2}

Fix, throughout this paper, a prime power $q$ and a positive integer $m$. 
As usual, $\Z$ denotes the set of integers and  $\Fq$ the finite field with $q$ elements. Given any $b\in \Z$ and $a$ in an integral domain $\mathsf{R}$ containing $\Z$ as a subring, the corresponding binomial coefficient is defined~by
$$
{a \choose b} :=\begin{cases} 
\displaystyle{\frac{a(a-1) \cdots (a-b+1)}{b!}} & \text{ if } b \ge 0, \\[.75em]
0 & \text{ if } b < 0.
\end{cases}
$$
This is, in general, an element of the quotient field of $\mathsf{R}$. In case $\mathsf{R}=\Z$, it is in $\Z$.
%
%

\subsection{Reed-Muller Codes}
\label{subsec:RM}
Let $\nu\in \Z$ with  $\nu \ge 0$, and let $\Fq[X_1,\ldots,X_m]_{\le \nu}$ denote the space of all polynomials in $m$ variables with coefficients in $\Fq$ of (total) degree $\le \nu$. 
 Write $\Fq^m=\{{{P}}_1,\ldots,{{P}}_{q^m}\}$. Consider 
 the evaluation map 
\begin{eqnarray*}
&\text{ev}:\Fq[X_1,\ldots,X_m]_{\le \nu} \to \Fq^{q^m} \quad {\rm defined\; by\; }\quad f\mapsto c^{ }_f:=\left(f({{P}}_1),\ldots, f({{P}}_{q^m}) \right).
\end{eqnarray*}
%
Clearly $\text{ev}$ is a linear map and 
its image is a nondegenerate linear code of length $q^m$; this code 
is called the \emph{(generalized) Reed-Muller code} of order $\nu$, and it is denoted by $\RM_q(\nu,m)$.

It is easy to see that if  $\nu < q$, then the  map $\text{ev}$ is injective and so $\dim\ \mathrm{RM}_q(\nu,m) = {{m+\nu}\choose{\nu}}$; see, e.g., \cite[Lem. 2.1]{G2}.
Also, if  $\nu < q$,  then it follows from Ore's inequality (see, e.g., \cite[Cor. 2.7]{G2}) that the minimum distance of $\RM_q(\nu,m)$ is
  $(q - \nu) q^{m-1} $. 
On the other hand, if  $\nu\ge m(q-1)$, then it is not difficult to see (using, for instance, a slight modification of the argument in \cite[Rem. 2]{GR}) that the map $\mathrm{ev}$ is surjective and so $\mathrm{RM}_q(\nu,m) =\Fq^{q^m}$.  With this in view, we shall assume that $0\le \nu \le m(q-1)$. In this general case, the restriction of $\mathrm{ev}$ to the space of all \emph{reduced}\footnote{A polynomial $f\in \Fq[X_1,\ldots,X_m]$ is said to be \emph{reduced} if $\deg_{X_i}f \le q-1$ for each $i=1,\dots , m$.}  polynomials in $\Fq[X_1,\ldots,X_m]_{\le \nu}$ is injective (see, e.g., \cite[Lem. 2.1]{G2}); consequently, the  
dimension, say $\rho_q(\nu, m) $,  of $\RM_q(\nu,m)$ is given by the number of reduced monomials in $m$ variables of degree $\nu$, and it is explicitly given by either of the following formulas. 
\begin{equation}
\label{eq:dimRM}
\sum_{s=0}^{\nu}\sum_{i=0}^m (-1)^i \binom{m}{i} \binom{s-iq + m - 1}{s-iq} =
\sum_{i=0}^{m}(-1)^i \binom{m}{i}\binom{m+\nu -iq}{m}.
\end{equation}
One may refer to \cite[Thm.  5.4.1]{AssKey} for a proof of the first formula,  and to \cite[Prop.~5]{RT} or \cite[Lem. 1]{GR} for a direct proof of the second formula.

Kasami-Lin-Peterson \cite{KLP} showed that if $\nu\in \Z$ with $0\le \nu \le m(q-1)$  is written as 
${ \nu=t(q-1)+s}$ for unique $t,s\in \Z$ such that $t\ge 0$ and  $0\le s<q-1$,  then
the minimum distance of $\RM_q(\nu,m)$ is $ (q-s)q^{m-t-1}$. Furthermore, with 
 $t,s$ as above, Delsarte-Goethals-MacWilliams \cite{DGM} showed that $c\in  \mathrm{RM}_q(\nu,m)$ is a minimum weight codeword if and only if $c = \text{ev}(f)$, where
\begin{equation}
\label{eq:minwtrm}
f =  \omega_0\left(\prod_{i=1}^{t}(1-\ell_i^{q-1})\right)  \prod_{j=1}^{s} (\ell_{t+1}-\omega_j)
\end{equation}
for some linearly independent polynomials $\ell_1,\ldots,\ell_{t+1} \in \Fq[X_1, \dots , X_m]$ of degree~$1$, and some  
$\omega_0, \omega_1, \dots , \omega_s \in \Fq$ 
with $\omega_0\neq 0$ and $\omega_1, \dots , \omega_s$ distinct. For an alternative proof of this characterization, see \cite{Le}. Here it should be understood that when $s=0$, the second product in \eqref{eq:minwtrm} is empty (and hence equal to $1$) and in this case only the existence and linear independence of $t$ polynomials  $\ell_1,\ldots,\ell_{t} \in \Fq[X_1, \dots , X_m]$ of degree~$1$ is asserted in the characterization. Delsarte-Goethals-MacWilliams \cite{DGM} further observed that the number of minimum weight codewords of  $\RM_q(\nu,m)$ is given by
\begin{equation}
	\label{eq:NoMinWtRM}
	(q-1) q^t {{m}\brack{t}}_q M_s, \quad \text{where} \quad M_s:=\begin{cases}
	\displaystyle{	{q\choose s} { {m-t}\brack{1}}_q }& \text{ if } s >0, \\[.75em]
		1 & \text{ if } s = 0.
	\end{cases}
\end{equation}
Here, and hereafter, for any $a, b\in \Z$, we denote, as usual, by ${{a}\brack{b}}_q$ the Gaussian binomial coefficient given by 
$$
{{a}\brack{b}}_q : = \begin{cases} \displaystyle{ \frac{(q^a-1)(q^a-q) \cdots (q^a - q^{b-1})}{(q^b-1)(q^b-q) \cdots (q^b - q^{b-1})}} & \text{ if } 0\le b \le a, \\[.75em]
		0 & \text{ otherwise}.
	\end{cases}
$$
Delsarte-Goethals-MacWilliams \cite{DGM} also showed that the dual of 
$\mathrm{RM}_q(\nu,m)$ is given by $\mathrm{RM}_q(m(q-1) - \nu - 1,m)$, where 
by convention,  $\RM_q(-1,m)$ is the trivial code $\{\mathbf{0}\}$.
As indicated in the Introduction, the automorphism group as well as the generalized Hamming weights of $\RM_q(\nu, m)$ are known, and for these, we refer to the papers of 
 Berger and Charpin \cite{BC} and of Heijnen and Pellikaan \cite{HP} as well as Beelen and Datta \cite{BD}. 

\subsection{Projective Reed-Muller Codes}\label{subsec:PRM}
Fix, in the remainder of this paper, a positive integer $d$. Given any $k\in \Z$, let
$$
p^{ }_k:= \begin{cases} 1+q+q^2+\cdots+q^k &  \text{ if } k\ge 0, \\
 0 &  \text{ if } k<0. \end{cases}
 $$ 
 Each point of the $m$-dimensional projective space $\PP^m(\Fq)$ over $\Fq$ admits a unique representative in $\Fq^{m+1}$ in which the last
nonzero coordinate is $1$. Let $\mathsf{P}_1,\ldots, \mathsf{P}_{p^{ }_m}$ be an ordered listing of such representatives
in $\Fq^{m+1}$ of points of $\PP^m(\Fq)$. 
Let $\Fq[X_0,\ldots,X_m ]_d$ denote the space of homogeneous polynomials
in $\Fq[X_0, \dots ,X_m]$ of degree $d$ together with the zero polynomial.
Consider the evaluation map 
\begin{eqnarray}\label{eqprm:Ev}
&\Ev:\Fq[X_0,\ldots,X_m]_d \to \Fq^{{p^{}_m}} \quad {\rm defined\; by}\quad  F\mapsto c^{ }_F:=(F({\mathsf{P}}_1),\ldots, F({\mathsf{P}}_{p^{}_m})).
\end{eqnarray}
Note that $\Ev$ is a linear map 
and its image is a nondegenerate linear code of length $p^{ }_m$. This code is said to be 
the \emph{projective Reed-Muller code} of order $d$ and length $p^{ }_m$, and it is denoted by $\prm_q(d,m)$.

A convenient way to study the projective Reed-Muller code is to use the notion of projective reduction, which was introduced in \cite{BDG1} and employed in \cite{G2} to determine the vanishing ideal of $\PP^m(\Fq)$. Let us recall that the projective reduction $\overline{\mu}$ of a monomial $\mu \in \Fq[X_0,\ldots,X_m ]$  
of positive degree is defined as follows. Write 
$\mu =
X_0^{a_0}\cdots X_{\ell}^{a_{\ell}}$ for unique nonnegative integers $a_0, \dots , a_{\ell}$ with $a_{\ell}>0$ so that $X_{\ell}$ is the last variable in $\mu$. 
For $0 \le i \le \ell -1$, let $\overline{a}_{i}=a_{i}$ if $0\le a_{i} \le q-1$, whereas if $a_{i} \ge q$, then let 
$\overline{a}_{i} $ be the unique integer such that 
  $\overline{a}_{i} \equiv a_{i} \Mod{q-1}$ and 
$1\le \overline{a}_{i} \le q-1$. We then define  
$$
\overline{\mu}:=X_0^{\overline{a}_0}\cdots X_{\ell-1}^{\overline{a}_{\ell-1}}X_{\ell}^{a_{\ell}+\sum_{ i=0}^{\ell -1}(a_{i}-\overline{a}_{i})}.
$$
The monomial $\overline{\mu}$ is uniquely determined by $\mu$ and it has the same degree as  $\mu$. In case $\mu =1$,  we set $\overline{\mu}:=1$. 
Observe that $\overline{\mu}({\mathsf{P}}) = \mu({\mathsf{P}})$ for all ${\mathsf{P}}\in \Fq^{m+1}$. 
We say that a monomial $\mu$ is \emph{projectively reduced} if $\overline{\mu}=\mu$. 
These notions extend by $\Fq$-linearity to arbitrary polynomials in $\Fq[X_0,\ldots,X_m ]$. 
 Now let
	$$
		C_q(d,m):=\{f\in \Fq[X_0,\ldots,X_m]_d: f \text{ is projectively reduced}\}.
	$$
Note that $C_q(d,m)$ is a $\Fq$-linear subspace of the polynomial ring $\Fq[X_0,\ldots,X_m]$.

\begin{lemma}\label{lemp3}
The restriction ${\rm Ev}|_{C_q(d,m)}$ of the  evaluation map defined in \eqref{eqprm:Ev},   to $C_q(d,m)$, is injective.   Moreover, $\prm_q(d,m)={\rm Ev}(C_q(d,m))$. 
 
\end{lemma}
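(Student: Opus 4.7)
The statement splits into two claims, which I would handle separately. The equality $\prm_q(d,m)=\Ev(C_q(d,m))$ is the easier half, and my plan is to extend the projective reduction $\Fq$-linearly: for $f = \sum c_\mu \mu \in \Fq[X_0,\ldots,X_m]_d$, set $\overline{f} := \sum c_\mu \overline{\mu}$. Since each $\overline{\mu}$ is projectively reduced and $\deg\overline{\mu} = \deg\mu$, we have $\overline{f} \in C_q(d,m)$; and since $\overline{\mu}$ and $\mu$ agree pointwise on $\Fq^{m+1}$ (as observed just before the lemma), $\Ev(f) = \Ev(\overline{f})$, establishing $\prm_q(d,m) \subseteq \Ev(C_q(d,m))$. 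The reverse inclusion is immediate from $C_q(d,m) \subseteq \Fq[X_0,\ldots,X_m]_d$.

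For the injectivity of $\Ev|_{C_q(d,m)}$, I would exploit the affine stratification of $\PP^m(\Fq)$: for $k \in \{0,1,\ldots,m\}$, the type-$k$ representatives are precisely those of the form $(p_0,\ldots,p_{k-1},1,0,\ldots,0)$ with $(p_0,\ldots,p_{k-1})\in\Fq^k$, and together they exhaust $\mathsf{P}_1,\ldots,\mathsf{P}_{p_m}$. Given $f\in C_q(d,m)$ with $\Ev(f)=0$, I would decompose $f = f_0 + f_1 + \cdots + f_m$, where $f_\ell$ collects the monomials of $f$ whose last variable is $X_\ell$. The projective-reduction hypothesis translates into the constraints $\deg_{X_i} f_\ell \le q-1$ for all $i<\ell$, together with $\deg_{X_\ell} \ge 1$ on every monomial of $f_\ell$. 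The key observation is that when $\ell > k$, $f_\ell$ vanishes at every type-$k$ representative (because of the factor $X_\ell^{a_\ell}$ with $a_\ell\ge 1$ meeting $p_\ell=0$), so the equations $f(\mathsf{P})=0$ at type-$k$ points involve only $f_0,\ldots,f_k$.

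I would then induct on $k$. The type-$0$ equation forces $f_0=c_0 X_0^d$ with $c_0=0$. Assuming $f_0=\cdots=f_{k-1}=0$, the type-$k$ condition collapses to $\widetilde{g}_k(p_0,\ldots,p_{k-1}) = 0$ for every $(p_0,\ldots,p_{k-1})\in\Fq^k$, where $\widetilde{g}_k(X_0,\ldots,X_{k-1}) := f_k(X_0,\ldots,X_{k-1},1)$. The main obstacle I anticipate is the bookkeeping needed to confirm that $\widetilde{g}_k$ is \emph{classically reduced} in $k$ variables, i.e., $\deg_{X_i}\widetilde{g}_k \le q-1$ for all $i<k$, and that distinct monomials of $f_k$ descend to distinct monomials of $\widetilde{g}_k$; this is precisely where the projective-reduction hypothesis on $f$ pays off. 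Once that is settled, the standard fact (cf.\ \cite[Lem. 2.1]{G2}) that a reduced polynomial on $\Fq^k$ which vanishes everywhere must be identically zero yields $\widetilde{g}_k=0$, hence $f_k=0$, closing the induction.
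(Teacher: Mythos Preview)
Your proposal is correct. The paper's own proof is simply a citation: it invokes Lemma~3.1, Lemma~3.2 and Theorem~3.3 of \cite{G2} for the two assertions respectively, with no argument given in the paper itself.

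Your route is therefore more explicit than what appears here, and it is worth noting what each buys. The paper keeps the lemma to two lines by outsourcing the work to \cite{G2}. Your argument, by contrast, is self-contained: the surjectivity half is exactly the observation that projective reduction is degree-preserving and evaluation-preserving (this is essentially the content of \cite[Lem.~3.2]{G2}), while your injectivity argument---stratifying $\PP^m(\Fq)$ by the index of the last nonzero coordinate, decomposing $f$ by the last variable present, and inducting down to the affine fact that a reduced polynomial vanishing on all of $\Fq^k$ is zero---is precisely the kind of reasoning underlying \cite[Lem.~3.1]{G2}. The bookkeeping point you flag (that distinct monomials of $f_k$ descend to distinct reduced monomials of $\widetilde g_k$) is handled cleanly by homogeneity: once $a_0,\dots,a_{k-1}$ are fixed, $a_k$ is forced by $a_0+\cdots+a_k=d$. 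So your anticipated obstacle dissolves, and the induction closes as you describe.
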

\begin{proof}
The first assertion follows from Lemma 3.1 of \cite{G2}, and 
the second assertion follows from Lemma 3.2 and Theorem 3.3 of \cite{G2}.
\end{proof}

\begin{cor}
\label{cor:lem1}
If  $F$ is a nonzero polynomial in $C_q(d,m)$, then $|V(F)|\le p_m -1$.
\end{cor}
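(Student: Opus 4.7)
The plan is to deduce the corollary directly from the injectivity part of Lemma~\ref{lemp3}. Recall that $V(F)$ denotes the zero set of $F$ in $\PP^m(\Fq)$, which has exactly $p_m$ points, and the codeword $c_F = \mathrm{Ev}(F) = (F(\mathsf{P}_1), \ldots, F(\mathsf{P}_{p_m}))$ encodes precisely the evaluations of $F$ at a chosen system of representatives of these points. The assertion $|V(F)| \le p_m - 1$ is then equivalent to the statement that at least one coordinate of $c_F$ is nonzero, i.e., $c_F \neq \mathbf{0}$.

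First, I would observe that since $F \in C_q(d,m)$ and $F \neq 0$, Lemma~\ref{lemp3} tells us that $\mathrm{Ev}|_{C_q(d,m)}$ is injective, so $\mathrm{Ev}(F) \neq \mathbf{0}$. Next, I would note that vanishing of $F$ at a point $[\mathsf{P}] \in \PP^m(\Fq)$ is well-defined (independent of the choice of representative) because $F$ is homogeneous; in particular, $F(\mathsf{P}_i) = 0$ for the chosen representative $\mathsf{P}_i$ if and only if $[\mathsf{P}_i] \in V(F)$. Consequently, the number of zero coordinates of $c_F$ equals $|V(F)|$.

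Finally, since $c_F \neq \mathbf{0}$, at least one of the $p_m$ coordinates of $c_F$ is nonzero, which gives $|V(F)| \le p_m - 1$.

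There is no real obstacle here: the corollary is essentially a restatement of the injectivity of $\mathrm{Ev}|_{C_q(d,m)}$ in geometric language. The only mild subtlety is verifying that $|V(F)|$ coincides with the number of zero coordinates of $c_F$, which is an immediate consequence of the choice of $\mathsf{P}_1, \ldots, \mathsf{P}_{p_m}$ as a fixed set of representatives and the homogeneity of $F$.
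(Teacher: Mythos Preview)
Your argument is correct and is exactly the approach the paper takes: the corollary is stated as an immediate consequence of Lemma~\ref{lemp3}, and your write-up simply unpacks that implication. Nothing needs to be changed.
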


\begin{proof}
This is an immediate consequence of Lemma \ref{lemp3}.
\end{proof}

\begin{lemma}\label{lemp4}
If $d\ge m(q-1)+1$, then $\prm_q(d,m) = \prm_q(m(q-1)+1,m) = \Fq^{{p^{}_m}}$. 
\end{lemma}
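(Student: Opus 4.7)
The plan is to combine Lemma \ref{lemp3} with a direct count of projectively reduced monomials of degree $d$, avoiding any explicit construction of indicator polynomials. By Lemma \ref{lemp3}, $\prm_q(d,m) = \Ev(C_q(d,m))$ and $\Ev|_{C_q(d,m)}$ is injective, so $\dim_{\Fq} \prm_q(d,m)$ equals the number of projectively reduced monomials of degree $d$ in $X_0, \dots, X_m$. Since the code sits inside $\Fq^{p_m}$, it is enough to show that this count equals $p_m$ whenever $d \ge m(q-1)+1$.

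For the counting step, I would unpack the definition of projective reduction given just before Lemma \ref{lemp3}: a monomial $\mu = X_0^{a_0}\cdots X_\ell^{a_\ell}$ with $a_\ell>0$ (so $X_\ell$ is the last variable appearing) is projectively reduced if and only if $0\le a_i\le q-1$ for every $i<\ell$, while $a_\ell$ is unrestricted apart from $a_\ell\ge 1$. Fix $\ell\in\{0,1,\dots,m\}$ and any tuple $(a_0,\dots,a_{\ell-1})\in\{0,1,\dots,q-1\}^\ell$; the remaining exponent is forced to be $a_\ell = d - \sum_{i<\ell} a_i$. This is where the hypothesis $d\ge m(q-1)+1$ enters crucially: it guarantees $a_\ell \ge d-\ell(q-1) \ge d-m(q-1) \ge 1$, so every such tuple genuinely produces a valid projectively reduced monomial of degree $d$ (with no waste). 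Summing over $\ell$ then yields the total count $\sum_{\ell=0}^m q^\ell = p_m$.

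Combining the two steps gives $\dim_{\Fq} \prm_q(d,m)=p_m$, and since $p_m$ is also the length of the code, we conclude $\prm_q(d,m)=\Fq^{p_m}$. Applying this to $d=m(q-1)+1$ and to any larger $d$ simultaneously covers the two equalities in the statement. The only real subtlety is the degree bookkeeping in the second paragraph: the threshold $d\ge m(q-1)+1$ is chosen precisely so that even after pushing each lower exponent $a_i$ up to its maximum value $q-1$, the last exponent $a_\ell$ stays strictly positive, and correspondingly the choices of $a_0,\dots,a_{\ell-1}$ become completely free. (A more hands-on alternative is to construct, for each point $\mathsf{P}$ with last nonzero coordinate at position $\ell$ and coordinates $p_0,\dots,p_\ell=1$, a homogeneous indicator $F_{\mathsf{P}} = X_\ell^{d-m(q-1)}\prod_{j>\ell}(X_\ell^{q-1}-X_j^{q-1})\prod_{j<\ell}(X_\ell^{q-1}-(X_j-p_jX_\ell)^{q-1})$ of degree $d$, but the dimension count above is shorter given Lemma \ref{lemp3}.)
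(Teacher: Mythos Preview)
Your argument is correct. You use Lemma~\ref{lemp3} to reduce the question to counting projectively reduced monomials of degree $d$, and the count $\sum_{\ell=0}^m q^\ell = p_m$ is carried out cleanly; the key inequality $a_\ell \ge d - \ell(q-1) \ge d - m(q-1) \ge 1$ is exactly what makes the choices of $(a_0,\dots,a_{\ell-1})$ completely free.

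This is a genuinely different route from the paper's proof. The paper argues surjectivity of $\Ev$ directly by writing down, for each point $\mathsf{P}_\nu$, an explicit homogeneous indicator polynomial $F_\nu$ of degree $d$ (precisely the one you sketch parenthetically at the end), and then observing that any $\lambda\in\Fq^{p_m}$ is $\Ev(\sum_\nu \lambda_\nu F_\nu)$. Your approach instead leans on the injectivity half of Lemma~\ref{lemp3} together with a dimension count, which is shorter and fits naturally with the projective-reduction framework already in place. The trade-off is that the paper's construction is more self-contained (it does not invoke Lemma~\ref{lemp3}) and, more importantly, the explicit polynomial $F_\nu$ of \eqref{eqprm:Fnu} is reused later in the proof of Theorem~\ref{prm:charmin} to identify the minimum weight codewords in the extremal case $d = m(q-1)+1$; so having it on record has independent value.
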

\begin{proof}
Let $\nu \in \{1, \dots , p^{ }_m\}$. Write  ${\mathsf{P}}_{\nu} \in \Fq^{{{m+1}}}$ as 
 ${\mathsf{P}}_{\nu} = (a^{ }_{0},\dots,a^{ }_{j_{\nu}-1},1,0,\ldots,0)$ for unique  
$j_{\nu}\in\{0,1,\dots , m\}$ and  $a^{ }_0,\ldots, a^{ }_{j_\nu-1}\in \Fq$. 
Now assume that  $d\ge m(q-1)+1$. 
Define the polynomial $F^{}_{\nu} = F^{}_{\nu}(X_0, \dots , X_m)$ by 
\begin{eqnarray}\label{eqprm:Fnu}
F^{}_{\nu} := X^{d-m(q-1)}_{j_\nu}\prod\limits_{i=0}^{j_\nu-1}\left( X_{j_\nu}^{q-1}
- (X_i-a^{ }_{i}X_{j_\nu})^{q-1}\right)\prod\limits_{k=j_\nu+1}^{m} (X_{j_\nu}^{q-1}  - X_k^{q-1}).
\end{eqnarray}
Clearly, $F^{}_{\nu} \in \Fq[X_0, \dots , X_m]_d$. 
Observe that 
$F_{\nu}({\mathsf{P}}_{\nu}) = 1$ and $F_{\nu}({\mathsf{P}}_{\mu}) =0$ for any $\mu \in \{1, \dots , p^{ }_m\}$ with $\mu \ne \nu$. Hence
any $\lambda = (\lambda_1, \dots , \lambda_{p^{ }_m}) \in \Fq^{p^{ }_m}$ can be written as 
$\lambda = \Ev(F)$, where 
$F = \lambda_{1} F_{1} + \cdots + \lambda_{p^{ }_m} F_{p^{ }_m}$. It follows that $\prm_q(d,m) = \Fq^{{p^{ }_m}}$.
\end{proof}

We remark that the polynomial \eqref{eqprm:Fnu} was used 
by S\o rensen in \cite[Lem. 3]{S}. 

\section{Minimum Distance of Projective Reed-Muller Codes}

In view of the results of the previous section, every codeword of $\prm_q(d,m)$ is of the form  $\text{Ev}(F)$ for a unique $F\in {C_q(d,m)}$, and determining the Hamming weight of this codeword corresponds to finding the number of zeros in $\PP^m(\Fq)$ of $F$. 

For any 
$f_1,\ldots,f_r\in \Fq[X_1,\ldots,X_m]_{\le d}$, we denote by $Z(f_1,\ldots,f_r)$ the set of common zeros of $f_1,\ldots,f_r$ in $\mathbb{A}^m(\Fq)$, whereas for $F_1,\ldots, F_r\in \Fq[X_0,\ldots,X_m]_{d}$, we denote by $V(F_1,\ldots,F_r)$ the set of common zeros of $F_1,\ldots,F_r$ in $\PP^m(\Fq)$.  By a \emph{hyperplane} in $\PP^m(\Fq)$ we shall mean a subset of the form $V(L)$, where $L$ is a nonzero homogeneous linear polynomial in 
$\Fq[X_0,\ldots,X_m]$.   We shall denote by $\widehat{\PP^m}$  the set of all hyperplanes in $\PP^m(\Fq)$. Note that the cardinality of $\widehat{\PP^m}$ is $p_m$. Also note that  if $P, P'$ are distinct points  of $\PP^m(\Fq)$, then $|\{H\in \widehat{\PP^m} : P\in H\}| =p_{m-1}$
and $|\{H\in \widehat{\PP^m} : P, P'\in H\}| =p_{m-2}$.

\begin{lemma}
\label{lem:newlem3}
Let $F$ be a nonzero projectively reduced polynomial in $\Fq[X_0,\ldots,X_m]_d$. 
Suppose $L$ is a nonzero homogeneous linear polynomial in 
$\Fq[X_0,\ldots,X_m]$ such that the hyperplane $V(L)$ is contained in $V(F)$. Then $L$ divides $F$. 
\end{lemma}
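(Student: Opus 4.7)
My approach is polynomial division. After rescaling $L$, write $L = X_\ell + L'$ where $\ell$ is the largest index such that $X_\ell$ appears in $L$ with nonzero coefficient and $L' \in \Fq[X_0, \dots, X_{\ell-1}]$ is a linear form (possibly zero). Viewing $F$ as a polynomial in $X_\ell$ over $\Fq[X_0, \dots, \widehat{X_\ell}, \dots, X_m]$, Euclidean division by $L$ (whose leading coefficient in $X_\ell$ is the unit~$1$) yields a unique decomposition $F = L Q + R$, where $Q \in \Fq[X_0, \dots, X_m]_{d-1}$ is homogeneous and $R \in \Fq[X_0, \dots, \widehat{X_\ell}, \dots, X_m]_d$ has degree zero in $X_\ell$. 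Concretely, $R$ is obtained from $F$ by the substitution $X_\ell \mapsto -L'$.

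The projection $\pi \colon \PP^m(\Fq) \to \PP^{m-1}(\Fq)$ dropping the $X_\ell$-coordinate restricts to a bijection from $V(L)$ onto $\PP^{m-1}(\Fq)$, and $F(P) = R(\pi(P))$ for every $P \in V(L)$. Hence the hypothesis $V(L) \subseteq V(F)$ is equivalent to $R$ vanishing at every point of $\PP^{m-1}(\Fq)$. To complete the proof it suffices to show $R = 0$, for then $F = L Q$ and $L$ divides $F$. The natural way to deduce this is to verify that $R$ is projectively reduced in the smaller ring $\Fq[X_0, \dots, \widehat{X_\ell}, \dots, X_m]$, and then invoke Corollary~\ref{cor:lem1} (applied with $m-1$ in place of $m$), which would immediately force $R = 0$.

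The crux of the argument is therefore the verification of projective reducedness of $R$. To address this, decompose $F = \sum_{j=0}^d X_\ell^j F_j$ with $F_j \in \Fq[X_0, \dots, \widehat{X_\ell}, \dots, X_m]_{d-j}$, and exploit the constraints that projective reducedness of $F$ imposes on the $F_j$: in particular, $F_0$ is automatically projectively reduced in the smaller ring, and for each $j \ge 1$, any monomial of $F_j$ involving some variable $X_k$ with $k > \ell$ must satisfy $j \le q-1$. Since $L'$ only involves the low-indexed variables $X_0, \dots, X_{\ell-1}$, the substitution $X_\ell \mapsto -L'$ leaves the variables with index $> \ell$ untouched and only mixes the low-indexed ones among themselves. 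Writing $R = F_0 + \sum_{j\ge 1} (-L')^j F_j$, the delicate step---which I expect to be the main obstacle---is to show that no non-last exponent in any monomial of $R$ exceeds $q-1$; this should follow from a careful monomial-level analysis, possibly supplemented by induction on $\ell$ or on $d$ to handle the interaction between the expansion of $(-L')^j$ and the structure of $F_j$.
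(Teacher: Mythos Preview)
There is a genuine gap: the remainder $R$ obtained from your substitution $X_\ell \mapsto -L'$ is \emph{not} in general projectively reduced, so the ``delicate step'' you flag cannot be completed. For a concrete obstruction take $q=3$, $m=2$, $L = X_0 + X_1$ (so $\ell = 1$, $L' = X_0$) and $F = X_0^{2}X_1^{2}X_2$. This $F$ is projectively reduced, since its last variable is $X_2$ and the exponents of $X_0,X_1$ are both $q-1=2$; but the substitution $X_1\mapsto -X_0$ gives $R = X_0^{4}X_2$, whose $X_0$-exponent $4$ exceeds $q-1$ while $X_2$ is still the last variable in $\Fq[X_0,X_2]$. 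Your structural observations on the $F_j$ are correct but do not prevent this: whenever a monomial of $F$ carries exponent near $q-1$ in both $X_\ell$ and some $X_i$ with $i<\ell$, the substitution piles the $X_\ell$-weight onto the low-indexed variables via $(-L')^j$ and can drive those exponents well past $q-1$. No induction on $\ell$ or $d$ repairs this, because the failure already occurs for a single monomial.

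The paper avoids the difficulty by a different device: it first performs a linear change of coordinates so that $L = X_m$. Once $L$ is the \emph{last} coordinate, the remainder is simply $R = F(X_0,\dots,X_{m-1},0)$, i.e.\ the sum of those monomials of $F$ that do not involve $X_m$ at all---no substitution, no mixing of low-indexed variables. Each such monomial was already projectively reduced in $\Fq[X_0,\dots,X_m]$ and does not involve $X_m$, hence is automatically projectively reduced in $\Fq[X_0,\dots,X_{m-1}]$; Lemma~\ref{lemp3} then forces $R=0$ in one line. The idea you are missing is precisely this reduction to the last variable, which makes projective reducedness of the remainder a tautology, whereas your direct Euclidean division against a general $L$ produces a remainder whose monomial structure is uncontrolled.
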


\begin{proof}
By a linear change of coordinates, we may assume that $L=X_m$. 
Separating the monomials in $F$ that are divisible by $X_m$, we can write 
$F = X_mG +R$ for some homogeneous and projectively reduced polynomials $G\in \Fq[X_0, \dots , X_m]_{d-1}$ and $R\in \Fq[X_0, \dots , X_{m-1}]_{d}$. Since $V(X_m) \subseteq V(F)$, we see that $R(\mathsf{P}) =0$ for all $\mathsf{P}\in \PP^{m-1}(\Fq)$. 
But $R$ is projectively reduced, and so it follows from Lemma~\ref{lemp3} that $R$ is the zero polynomial. Thus, $X_m$ divides $F$. 
\end{proof}

\begin{thm}\label{maxzero}
Let $F$ be a nonzero polynomial in $C_q(d,m)$. 
Then 
\begin{equation}
\label{eq:mainineq}
|V(F)|\leq p^{ }_m-\lceil{(q-s)q^{m-t-1}\rceil},
\end{equation}
where $t,s$ are unique integers such that $d-1=t(q-1)+s$ with $t\ge 0$ and $0\le s<q-1$. Moreover, if $1\le d\leq m(q-1)$ and if the equality holds in \eqref{eq:mainineq}, then $V(F)$ is a union of at most $d$ distinct hyperplanes in $\PP^m(\Fq) $.
\end{thm}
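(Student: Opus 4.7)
The proof proceeds by induction on $m$, adapting the spirit of Serre's hyperplane-based approach to arbitrary $d$. First, observe that if $d > m(q-1)$, then $t \geq m$, so $(q-s)q^{m-t-1} \leq 1$ and \eqref{eq:mainineq} reduces to $|V(F)| \leq p_m - 1$, which follows from Corollary \ref{cor:lem1}. Henceforth assume $1 \leq d \leq m(q-1)$, so that $t \leq m-1$ and the ceiling in \eqref{eq:mainineq} equals the integer $(q-s)q^{m-t-1}$. The base case $m = 1$ is immediate: here $d \leq q-1$, $t = 0$, $s = d-1$, and a nonzero binary form of degree $d$ has at most $d = p_1 - (q-s)$ projective zeros, with equality when $F$ factors as a product of $d$ distinct linear forms.

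For the inductive step, the central dichotomy is whether $F$ admits a linear factor. \emph{Case 1: $F$ has a linear factor.} After a change of coordinates, take the factor to be $X_0$ and write $F = X_0 G$ with $G \in C_q(d-1, m)$ nonzero; projective reducedness of $G$ is inherited from that of $F$. Decomposing $\PP^m(\Fq) = V(X_0) \sqcup \A^m(\Fq)$, we obtain
$$|V(F)| = p_{m-1} + \big|\{P \in \A^m(\Fq) : \tilde G(P) = 0\}\big|,$$
where $\tilde G(X_1, \dots, X_m) := G(1, X_1, \dots, X_m)$ is a nonzero polynomial of degree at most $d-1$. Applying the Kasami-Lin-Peterson formula for the minimum distance of $\RM_q(d-1, m)$---which, with $d-1 = t(q-1) + s$, equals $(q-s)q^{m-t-1}$---we obtain $|\{P : \tilde G(P) = 0\}| \leq q^m - (q-s)q^{m-t-1}$, and substitution yields $|V(F)| \leq p_m - (q-s)q^{m-t-1}$. \emph{Case 2: $F$ has no linear factor.} By Lemma \ref{lem:newlem3}, no hyperplane lies in $V(F)$, and in particular $F_d := F(0, X_1, \dots, X_m)$ is nonzero (after a suitable choice of $X_0$). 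The split $|V(F)| = |V(F_d)| + |\{P \in \A^m(\Fq) : \tilde F(P) = 0\}|$, combined with the induction hypothesis applied to $F_d$ on $\PP^{m-1}$ (or Corollary \ref{cor:lem1} when $d > (m-1)(q-1)$) for the first summand and with Kasami-Lin-Peterson applied to $\RM_q(d, m)$ for the second summand, gives an upper bound on $|V(F)|$.

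The main obstacle occurs in Case 2 when $s \geq 1$ and $d \leq (m-1)(q-1)$: the direct combination of these two bounds overshoots the target $p_m - (q-s)q^{m-t-1}$ by $sq^{m-t-2}$. To close the gap, we invoke the Delsarte-Goethals-MacWilliams characterization \eqref{eq:minwtrm}: were $\tilde F$ to attain the minimum weight of $\RM_q(d, m)$, it would factor with $t+1$ linearly independent linear factors in $\Fq[X_1, \dots, X_m]$. Homogenizing each such factor would produce a corresponding linear factor of $F$ in $\Fq[X_0, \dots, X_m]$, contradicting the Case 2 hypothesis. Consequently $|V(\tilde F)|$ is strictly less than the naive affine bound, and a quantification of this strict improvement---via the known structure of second-lowest-weight codewords of $\RM_q(d,m)$, or via a recursive application of the Case 2 argument to $F_d$ on $\PP^{m-1}$---closes the numerical gap and establishes $|V(F)| \leq p_m - (q-s)q^{m-t-1}$ with strict inequality in Case 2. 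Hence equality in \eqref{eq:mainineq} can only occur in Case 1.

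For the second assertion, assume $|V(F)| = p_m - (q-s)q^{m-t-1}$ with $1 \leq d \leq m(q-1)$. Then we are necessarily in Case 1: $F = X_0 G$ after a change of coordinates, and $\tilde G$ is a minimum-weight codeword of $\RM_q(d-1, m)$. By \eqref{eq:minwtrm},
$$\tilde G = \omega_0 \prod_{i=1}^{t}(1 - \ell_i^{q-1}) \prod_{j=1}^{s}(\ell_{t+1} - \omega_j)$$
for some linearly independent linear polynomials $\ell_1, \dots, \ell_{t+1} \in \Fq[X_1, \dots, X_m]$ and scalars $\omega_0 \in \Fq^*$, $\omega_1, \dots, \omega_s \in \Fq$ distinct. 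Homogenizing, and using $X_0^{q-1} - \tilde \ell_i^{q-1} = \prod_{\beta \in \Fq^*}(X_0 - \beta \tilde\ell_i)$ to split each factor $1 - \ell_i^{q-1}$ into $q-1$ linear forms over $\Fq$, we conclude that $G$ is a product of $t(q-1) + s = d-1$ linear forms, so $F = X_0 G$ is a product of $d$ linear forms in $\Fq[X_0, \dots, X_m]$. Therefore $V(F)$ is the union of at most $d$ distinct hyperplanes in $\PP^m(\Fq)$, completing the proof.
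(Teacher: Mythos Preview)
Your Case 2 argument has a genuine gap. You correctly identify that when $s\ge 1$ and $d\le (m-1)(q-1)$, combining the inductive bound on $|V(F_d)|$ with the Kasami--Lin--Peterson bound on $|Z(\tilde F)|$ overshoots the target by $sq^{m-t-2}$, and you correctly observe (via the DGM form) that $\tilde F$ cannot be a minimum-weight codeword of $\RM_q(d,m)$ since that would force an affine linear factor and hence a homogeneous linear factor of $F$. But knowing that $\tilde F$ misses the minimum weight is not enough: you need the deficit to be at least $sq^{m-t-2}$, and neither of your two suggested fixes supplies this. The ``second weight'' route would require a precise next-to-minimum-weight theorem for $\RM_q(d,m)$, and even the known results (the gap between first and second weight is typically of order $q^{m-t-2}$, not $sq^{m-t-2}$) do not obviously suffice when $s\ge 2$. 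The ``recursive Case 2'' route fails outright: $F_d = F(0,X_1,\dots,X_m)$ can have linear factors even when $F$ has none (e.g.\ $F=X_0^2+X_1X_2$ over $\mathbb F_3$), so you cannot assume $F_d$ lands in Case 2.

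The paper circumvents this difficulty by organizing the argument around a different dichotomy and a different Case 2 technique. Rather than splitting on whether $F$ has \emph{some} linear factor, the paper asks whether $V(F)$ equals the union $\mathcal L$ of \emph{all} hyperplanes it contains. When $V(F)=\mathcal L$ (Case 1), one hyperplane section plus the Reed--Muller minimum distance of order $d-1$ gives the bound exactly as you do. When $V(F)\ne\mathcal L$ (Case 2), the paper picks $P\in V(F)\setminus\mathcal L$ and double-counts the incidence set $\{(P',H):P'\in V(F)\setminus\{P\},\ P,P'\in H\}$ in the manner of Serre: every hyperplane through $P$ fails to lie in $V(F)$, so the induction hypothesis on $\PP^{m-1}$ bounds each $|V(F)\cap H|$, and the resulting inequality is \emph{strict}. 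This delivers both \eqref{eq:mainineq} and the ``moreover'' clause immediately, since equality forces Case 1, where $V(F)$ is already a union of at most $d$ hyperplanes by construction. Your route to the ``moreover'' clause, by contrast, invokes the full DGM structure theorem \eqref{eq:minwtrm} to exhibit an explicit linear factorization---logically heavier than what the statement asks for, and dependent on the unresolved Case 2 gap.
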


\begin{proof} 
In case $d\geq m(q-1)+1$, or equivalently, $t>m-1$, then $\lceil{(q-s)q^{m-t-1}\rceil}=1$, and so \eqref{eq:mainineq} is a consequence of Corollary~\ref{cor:lem1}.

Now assume that  $1\le d\leq m(q-1)$. Then  $t\leq m-1$ and therefore 
$$
p^{ }_m-\lceil{(q-s)q^{m-t-1}\rceil} \ge p^{ }_m-q^m >0.
$$
In particular, \eqref{eq:mainineq} holds trivially if $V(F)$ is empty. Thus we shall assume that $V(F)$ is nonempty.
 
We will use induction on $m$. 
First, suppose $m=1$. Then, $d\leq q-1$, and hence $t=0$. 
Consequently, 
$ p^{ }_m-\lceil{(q-s)q^{m-t-1}\rceil} = p^{ }_1-(q-s)q^{1-0-1}=s+1=d$. 
Thus 
\eqref{eq:mainineq} is a consequence of the elementary fact that a nonzero homogeneous polynomial of degree $d$ in two variables over a field $F$ has at most $d$ non-proportional zeros in $F^2\setminus \{(0,0)\}$.  Moreover since hyperplanes in $\PP^1(\Fq)$ are given by points, we also see that if  $m=1$ and if the equality holds in \eqref{eq:mainineq}, then $V(F)$ is a union of exactly $d$ distinct hyperplanes in $\PP^m(\Fq) $.
%

Next, suppose $m>1$ and the result holds for smaller values of $m$. 
Define
\begin{eqnarray*}
\Gamma:=\{H\in \widehat{\PP^m} : H\subseteq V(F)\} .
\end{eqnarray*}
In view of Lemma \ref{lem:newlem3}, we see that $\Gamma$ can contain at most $d$ distinct hyperplanes in $\PP^m(\Fq) $. Write
$$
\Gamma=\{H_1,\ldots,H_\ell\} \quad \text{and} \quad \EuScript{L}=\bigcup_{i=1}^\ell H_i
$$
for some nonnegative integer $\ell \le d$. 
We shall now divide the proof in two mutually exclusive and exhaustive cases. \\

\noindent
{\bf Case 1.} $\Ls=V(F)$.

In this case, we must have $\ell \ge 1$ since $V(F)$ is nonempty. By a linear change of
coordinates, we may assume that $H_1=V(X_m)$.
In view of Lemma~\ref{lem:newlem3}, we see that $F(X_0,\ldots,X_{m-1},0)=0$ and $f(X_0,\ldots,X_{m-1}):=F(X_0,\ldots,X_{m-1},1)$ is a nonzero reduced polynomial in $\Fq[X_0,\ldots,X_{m-1}]_{\le d-1}$. Thus
\begin{eqnarray}
|V(F)| &=& |V(F(X_0,\ldots,X_{m-1},0))|+|Z(f)|\nonumber\\
& \leq & p^{ }_{m-1}+q^m-\lceil{(q-s)q^{m-t-1}\rceil}\label{(2)}\\
&  = & p^{ }_m-\lceil{(q-s)q^{m-t-1}\rceil},\nonumber
\end{eqnarray}
where the second step uses the formula (recalled in \S\,\ref{subsec:RM}) for the minimum distance of Reed-Muller code
$\RM_q(d-1,m)$ and the fact that 
$\mathrm{ev}(f)$ is a nonzero codeword of this code since $f$ is nonzero reduced polynomial. Thus  \eqref{eq:mainineq} is proved in Case~1. 
\\

\noindent
{\bf Case 2.} $\Ls\neq V(F)$.

In this case, there exists some $P\in V(F)\setminus \Ls$. In particular, a hyperplane 
in $\PP^m(\Fq)$ passing through $P$ can not be contained in $V(F)$. For sections of 
$V(F)$ by such hyperplanes, we make the following \\

\noindent
{\bf Claim:} If $H$ is any hyperplane 
in $\PP^m(\Fq)$ such that $H\not\subseteq V(F)$, then 
$$
|V(F)\cap H|\leq p^{ }_{m-1}-\lceil{(q-s)q^{m-t-2}\rceil}.
$$

To prove this, fix a hyperplane $H$ 
in $\PP^m(\Fq)$ such that $H\not\subseteq V(F)$. By a linear change of coordinates, we
may assume that $H=V(X_m)$. As in the proof of Lemma~\ref{lem:newlem3}, we can write 
$F = X_mG +R$ for some homogeneous and projectively reduced polynomials $G\in \Fq[X_0, \dots , X_m]_{d-1}$ and $R\in \Fq[X_0, \dots , X_{m-1}]_{d}$. Since $V(X_m) \not \subseteq V(F)$, we see that $R$ is a nonzero polynomial in $C_q(d,m-1)$. Also,  
$V(F)\cap H = V(R)\cap H$ is the zero set of $R$ in $\PP^{m-1}(\Fq)$. Hence by the induction hypothesis,  $|V(F)\cap H|\leq p^{ }_{m-1}-\lceil{(q-s)q^{m-t-2}\rceil}$.
This proves the chaim. \\ 

%
Recall that we had chosen  $P\in V(F)\setminus \Ls$. 
Following Serre \cite{Se}, we consider the incidence set 
$$
\Ie:=\{(P',H): P'\in V(F)\setminus \{P\} \text{ and } H \in \widehat{\PP^m} \text{ with } P,P'\in H\}.
$$
This set can be counted in two different ways as follows. On the one hand,
$$
|\Ie |\; = \sum_{P'\in V(F)\setminus \{P\} } \sum_{\substack{H \in \widehat{\PP^m} \\ P,P'\in H}} 1 
\; =  \sum_{P'\in V(F)\setminus \{P\} } p^{ }_{m-2} 
\; =  \; p^{ }_{m-2} (|V(F)|-1).
$$
%
On the other hand, 
\begin{eqnarray*}
|\Ie| &=& \sum_{\substack{ H\in \widehat{\PP^m} \\ P\in H} } 
\sum_{ \substack{ P'\in V(F)\setminus \{ P\}  \\ P'\in H} } 1 \\
&=& \sum\limits_{\substack{ H\in \widehat{\PP^m}\\ P\in H} } \left( |V(F)\cap H|-1 \right) \\
&\le & \sum\limits_{\substack{ H\in \widehat{\PP^m}\\ P\in H} } \left(  p^{ }_{m-1}-\lceil{(q-s)q^{m-t-2}\rceil}-1 \right) \\
& =& p^{ }_{m-1} \left( p^{ }_{m-1}-\lceil{(q-s)q^{m-t-2}\rceil}-1 \right)
\end{eqnarray*}
where the inequality in Step 3 follows from the claim above. 
On comparing both these values of $|\Ie|$, 
we obtain
\begin{eqnarray*}
&|V(F)|&\leq \frac{p^{ }_{m-1}(p^{ }_{m-1}-\lceil{(q-s)q^{m-t-2}\rceil}-1)}{p^{ }_{m-2}}+1\\
& & = \frac{(p^{ }_{m-1})^2-p^{ }_{m-1}+p^{ }_{m-2}}{p^{ }_{m-2}}-\frac{p^{ }_{m-1}\lceil{(q-s)q^{m-t-2}\rceil}}{p^{ }_{m-2}}\\
& & = \frac{(p^{ }_{m-1})^2-q^{m-1}}{p^{ }_{m-2}}-\frac{p^{ }_{m-1}}{p^{ }_{m-2}}\lceil{(q-s)q^{m-t-2}\rceil}\\
& & = \frac{(q^m-1)^2-q^{m-1}(q-1)^2}{(q^{m-1}-1)(q-1)}-\left(\frac{q^{m}-1}{q^{m-1}-1}\right)\lceil{(q-s)q^{m-t-2}\rceil}\\
& & = \frac{q^{m+1}-1}{q-1}-\left(\frac{q^{m}-1}{q^{m-1}-1}\right)\lceil{(q-s)q^{m-t-2}\rceil}\\
& & < p^{ }_m-q\lceil{(q-s)q^{m-t-2}\rceil},
\end{eqnarray*}
where the last inequality follows since $q^m-1>q^{m}-q$. 
Thus,
\begin{eqnarray*}
|V(F)|&<& 
\begin{cases}
 p^{ }_m- (q-s)q^{m-t-1} & \text{ if } t<m-1, \\
 p^{ }_m-q &\text{ if } t= m-1
  \end{cases}
\\[1em]
&\le & 
\begin{cases}
 p^{ }_m-(q-s)q^{m-t-1} & \text{ if } t<m-1,\\
 p^{ }_m-(q-s) & \text{ if } t= m-1
 \end{cases}
\\[1em]
&=&p^{ }_m-\lceil{(q-s)q^{m-t-1}\rceil}.
\end{eqnarray*}
This yields \eqref{eq:mainineq}.  Moreover, since the ineequality in \eqref{eq:mainineq} is strict in Case 2,  the second assertion is proved as well. 
\end{proof}

\begin{cor}\label{mindis}
Assume that $1\le d\le m(q-1)+1$. Then the minimum distance of projective Reed-Muller code $\prm_q(d,m)$ is 
\begin{eqnarray*}
{(q-s)q^{m-t-1}},
\end{eqnarray*}
where $t,s$ are unique integers such that $d-1=t(q-1)+s$ with $t\ge 0$ and $0\le s< q-1$. Moreover, if $\omega_1,\ldots,\omega_s$ are any distinct elements of $\Fq$ and if 
\begin{equation}
\label{eq:MinimalG}
G:=X_t\prod_{i=0}^{t-1} (X_i^{q-1}-X_t^{q-1})\prod_{j=1}^s (X_{t+1}-\omega_j X_t),
\end{equation}
then $c^{ }_G =\mathrm{Ev}(G)$ is a minimum weight codeword of $\prm_q(d,m)$.
\end{cor}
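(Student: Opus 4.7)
The plan is to combine the lower bound coming from Theorem~\ref{maxzero} with a direct weight computation for the explicit codeword $c^{ }_G$.

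For the lower bound, I would invoke Lemma~\ref{lemp3}: every nonzero codeword of $\prm_q(d,m)$ has the form $\Ev(F)$ for a unique nonzero $F \in C_q(d,m)$, so its Hamming weight equals $p^{ }_m - |V(F)|$. Theorem~\ref{maxzero} then gives $\wt(c^{ }_F) \ge \lceil (q-s)q^{m-t-1}\rceil$. The small point to verify here is that, under the hypothesis $1 \le d \le m(q-1)+1$, the number $(q-s)q^{m-t-1}$ is in fact a positive integer---for $d \le m(q-1)$ one has $t \le m-1$ and $q^{m-t-1} \ge 1$, while for $d = m(q-1)+1$ one has $t = m$, $s = 0$, and the value equals $q\cdot q^{-1} = 1$---so the ceiling may simply be dropped, yielding $\wt(c^{ }_F) \ge (q-s)q^{m-t-1}$.

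For the matching upper bound, I would verify directly that $c^{ }_G$ has the claimed weight. First, $G$ in \eqref{eq:MinimalG} is homogeneous of degree $1 + t(q-1) + s = d$, so $c^{ }_G \in \prm_q(d,m)$. To count $|V(G)|$, split the points of $\PP^m(\Fq)$ by whether the $X_t$-coordinate vanishes: if $a_t = 0$ then $G(\mathsf{P}) = 0$ because $X_t$ divides $G$; if $a_t \ne 0$ we may normalize $a_t = 1$. Fermat's little theorem gives $a_i^{q-1} - 1 \ne 0$ iff $a_i = 0$, and $a_{t+1} - \omega_j \ne 0$ iff $a_{t+1} \ne \omega_j$, so $G(\mathsf{P}) \ne 0$ precisely when $a_0 = \cdots = a_{t-1} = 0$ and $a_{t+1} \notin \{\omega_1, \ldots, \omega_s\}$, with $a_{t+2}, \ldots, a_m$ unconstrained in $\Fq$. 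This gives exactly $1\cdot(q-s)\cdot q^{m-t-1}$ nonvanishing coordinates of $c^{ }_G$, matching the lower bound.

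There is no substantive obstacle once Theorem~\ref{maxzero} is available; the bookkeeping items are the edge case $t = m$, $s = 0$ (where the second product in $G$ is empty and the trailing coordinates do not exist, leaving the single nonvanishing point $[0:\cdots:0:1]$ of weight $1 = q\cdot q^{-1}$), and the fact that $G$ need not itself be projectively reduced. The latter is harmless: since $\bar\mu(\mathsf{P}) = \mu(\mathsf{P})$ on $\Fq^{m+1}$ for every monomial $\mu$, one has $c^{ }_G = c^{ }_{\bar G}$ with $\bar G \in C_q(d,m)$, so $c^{ }_G$ is a genuine codeword of $\prm_q(d,m)$ and the count above gives its true Hamming weight.
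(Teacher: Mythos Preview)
Your proposal is correct and follows essentially the same route as the paper: apply Theorem~\ref{maxzero} for the lower bound after checking that the ceiling is superfluous when $1\le d\le m(q-1)+1$, and then count directly the points of $\PP^m(\Fq)$ at which $G$ does not vanish to obtain a codeword of weight exactly $(q-s)q^{m-t-1}$. Your extra remarks on the edge case $t=m$, $s=0$ and on $G$ not necessarily being projectively reduced are accurate and harmless refinements; the paper simply notes $G\in\Fq[X_0,\ldots,X_m]_d$ and evaluates.
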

\begin{proof}
Since $d\le m(q-1)+1$, we see that $t\le m$ and moreover, if $t=m$, then $s=0$. Thus,
$\lceil{(q-s)q^{m-t-1}\rceil} = {(q-s)q^{m-t-1}}$. Hence Theorem~\ref{maxzero} shows that 
\begin{equation}
\label{eq:MinDistBound}
d(\prm_q(d,m))\geq {(q-s)q^{m-t-1}}.
\end{equation}
Next, suppose $\lambda_1,\ldots,\lambda_s$ are distinct elements of $\Fq$ and $G$ is the polynomial given by \eqref{eq:MinimalG}. Clearly, $G\in \Fq[X_0,\ldots,X_m]_d$ and 
$G$ has zeros at all points of $\PP^m(\Fq)$ except those of the form $(0:\cdots:0:1:a_{t+1}:a_{t+2}:\cdots:a_m)$, where $a_{t+1}\neq \omega_j$ for $j=1,\ldots,s$ and $a_r\in \Fq$ for $r=t+2,\ldots,m$. Thus $|V(G)|=p^{ }_m-{(q-s)q^{m-t-1}}$. This, together with 
\eqref{eq:MinDistBound} shows that  the minimum distance of $\prm_q(d,m)$ is ${(q-s)q^{m-t-1}}$ and moreover, $c^{ }_G =\mathrm{Ev}(G)$ is a minimum weight codeword of $\prm_q(d,m)$.
\end{proof}

\begin{rmk}
{\rm The proof of minimum distance given by S\o rensen \cite[Thm.~1]{S} has a mistake in the case $t=m-1$, i.e., when $d-1 = (m-1)(q-1)+s$,  $0\le s < q-1$.  According to the argument in \cite[p. 1569]{S}, if $\PP^m(\Fq)\setminus V(F)=\{P_1,\ldots,P_e\}$,  then we can find linear homogeneous polynomials $G_i(X), i=1,\ldots, e-1$ such that $G_i(P_j)=\delta_{ij}$ for $i=1,\ldots,e-1,\ j=1,\ldots,e$.\footnote{If one is writing equalities such as $G_1(P_1)=1$, then one should take $P_1,  \dots , P_e$ to be certain fixed representatives in $\Fq^{m+1}$ of points of 
$\PP^m(\Fq)\setminus V(F)$, but this is tacitly understood in \cite{S} and stated explicitly in Section II of \cite{S}.  Also, unlike in \cite{S}, we have denoted $|\PP^m(\Fq)\setminus V(F)|$ by $e$ (rather than $t$) so as to avoid a conflict with the notation used in this paper. } However, this does not seem correct, since it is possible that some $P_j$ is a linear combination of the other points $P^{ }_i$.  Also, $H(X)=F(X)\prod_{i=1}^{e-1}G_i(X)$ does not imply that $V(H)=\PP^m(\Fq)\setminus \{P_e\}$. To correct this, we may use a weaker condition. 
Namely,  for each positive integer $ i\le e-1$, take $G_i(X)$ to be a  linear homogeneous polynomial such that $V(G_i)$ is a hyperplane which passes through $P_i$ but not through $P_e$. This  is possible since $P_i, P_e$ are distinct points of $\PP^m(\Fq)$. [See, for example,   Lemma~\ref{lem:LineLem} below.] Then $H(X):=F(X)\prod_{i=1}^{e-1}G_i(X)$ does have the property that 
 $V(H)=\PP^m(\Fq)\setminus \{P_e\}$.}
\end{rmk}

\begin{rmk}
{\rm 
An alternative proof of the formula for the minimum distance of $\prm_q(d,m)$ is given by 
Carvalho, Neumann, and Lopez \cite[Cor. 3.9 (c)]{CNL}, where they derive it from a more general result on projective nested cartesian codes.
}
\end{rmk}

\section{Minimum Weight Codewords of Projective Reed-Muller Codes}

In this section, we characterize the minimum weight codewords of projective Reed-Muller codes in terms of homogeneous polynomials whose images under the evaluation map \eqref{eqprm:Ev} are these codewords. We begin with a simple observation. 

\begin{lemma}
\label{lem:LineLem}
Given any $A, B\in \PP^m(\Fq)$ with $A\ne B$,  there is a homogeneous linear polynomial 
$L \in \Fq[X_0,\ldots,X_m]$ such that $L(A)=0$ and $L(B) \ne 0$. 
\end{lemma}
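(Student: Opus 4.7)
The plan is to lift the projective points to vectors in $\mathbb{F}_q^{m+1}$ and use elementary linear algebra in the dual space. Concretely, I would fix representatives $\tilde{A}, \tilde{B} \in \mathbb{F}_q^{m+1} \setminus \{\mathbf{0}\}$ of the projective points $A$ and $B$ respectively. The hypothesis $A \ne B$ in $\mathbb{P}^m(\mathbb{F}_q)$ is equivalent to the statement that $\tilde{A}$ and $\tilde{B}$ are linearly independent over $\mathbb{F}_q$ (since if they were dependent, one would be a scalar multiple of the other, forcing $A = B$).

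Next, I would extend $\{\tilde{A}, \tilde{B}\}$ to a basis $\{\tilde{A}, \tilde{B}, v_3, \ldots, v_{m+1}\}$ of $\mathbb{F}_q^{m+1}$. Since a linear map from $\mathbb{F}_q^{m+1}$ to $\mathbb{F}_q$ is determined freely by its values on a basis, I can define an $\mathbb{F}_q$-linear functional $\varphi : \mathbb{F}_q^{m+1} \to \mathbb{F}_q$ by $\varphi(\tilde{A}) := 0$, $\varphi(\tilde{B}) := 1$, and $\varphi(v_j) := 0$ for $j = 3, \ldots, m+1$. Writing $\varphi$ in the standard coordinates as $\varphi(X_0, \ldots, X_m) = c_0 X_0 + \cdots + c_m X_m$ for some $c_0, \ldots, c_m \in \mathbb{F}_q$ produces the desired homogeneous linear polynomial $L := c_0 X_0 + \cdots + c_m X_m \in \mathbb{F}_q[X_0, \ldots, X_m]$. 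By construction $L(\tilde{A}) = 0$ and $L(\tilde{B}) = 1 \ne 0$, and since these values depend only on the projective classes up to a nonzero scalar, we conclude $L(A) = 0$ and $L(B) \ne 0$ in the projective sense.

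There is no real obstacle here; the only thing to be careful about is the initial observation that distinct projective points correspond to linearly independent representatives, which relies on the fact that scalar multiples define the same point in $\mathbb{P}^m(\mathbb{F}_q)$. An equivalent and perhaps slicker formulation would note that the annihilator of $\tilde{A}$ in the dual space $(\mathbb{F}_q^{m+1})^*$ is an $m$-dimensional subspace while the annihilator of $\tilde{B}$ is a distinct $m$-dimensional subspace (distinct because $\tilde{A}$ and $\tilde{B}$ are independent), so their intersection has dimension $m-1 < m$, producing a functional in the first but not the second.
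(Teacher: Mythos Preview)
Your proof is correct, but it takes a more abstract route than the paper's. The paper simply writes down an explicit polynomial: with $A=(a_0:\cdots:a_m)$ and $B=(b_0:\cdots:b_m)$, the condition $A\ne B$ means the matrix with rows $(a_0,\ldots,a_m)$ and $(b_0,\ldots,b_m)$ has rank $2$, so some $2\times 2$ minor $a_ib_j-a_jb_i$ is nonzero, and then $L=a_iX_j-a_jX_i$ visibly satisfies $L(A)=0$ and $L(B)=a_ib_j-a_jb_i\ne 0$. Your approach via basis extension and linear functionals is equally valid and perhaps more conceptual (it makes transparent that the result is really about dual spaces and would work over any field), but the paper's construction is shorter and yields an explicit formula without any choices beyond picking the nonvanishing minor.
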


\begin{proof}
Let $A = (a_0:a_1: \cdots : a_m)$ and $B = (b_0:b_1: \cdots : b_m)$ be distinct elements of 
$ \PP^m(\Fq)$.  Since $A\ne B$,  there are integers $i, j$ with $0\le i < j \le m$ such that $a_ib_j - a_jb_i \ne 0$.  Now $L = a_i X_j - a_j X_i$ has the desired properties. 
\end{proof}

Let us recall that the codewords of $\prm_q(d,m)$ are of the form $c^{}_F$,  defined as in 
\eqref{eqprm:Ev},  where $F$ varies over the elements of $\Fq[X_0,\ldots,X_m]_d$.  The \emph{support} of such a codeword is given by the set
$$
\supp \left( c^{}_F \right) = \left\{  \mathsf{P} \in  \{\mathsf{P}_1,  ,\ldots, \mathsf{P}_{p^{ }_m}\} : 
F(\mathsf{P}) \ne 0\right\},
$$
where $\mathsf{P}_1,\ldots, \mathsf{P}_{p^{ }_m}$ are as in \S \ref{subsec:PRM}.
For any $F\in \Fq[X_0,\ldots,X_m]_d$,  we can 
clearly identify $\supp \left( c^{}_F \right)$  with $ \PP^m(\Fq) \setminus V(F)$.  In case $c^{}_F$ is a minimum weight codeword,  i.e.,  a nonzero codeword of least Hamming weight, 
we can say a little more.  

\begin{lemma}
\label{lem:Lem-cFcG}
Let $F\in \Fq[X_0,\ldots,X_m]_d$ be such that $c^{}_F$ is a minimum weight codeword of $\prm_q(d,m)$.  Then for any $G\in \Fq[X_0,\ldots,X_m]_d$,
$$
V(F) = V(G) \Longleftrightarrow 
\supp \left( c^{}_F \right) = \supp \left( c^{}_G \right)  
\Longleftrightarrow  c^{}_F = c^{}_{\lambda G} \text{ for some nonzero } \lambda \in \Fq.
$$
\end{lemma}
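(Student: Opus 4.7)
The plan is to break the cyclic chain of implications into three pieces and observe that two of them are essentially bookkeeping, leaving a single substantive implication that relies on the minimum weight hypothesis.

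The equivalence $V(F) = V(G) \Longleftrightarrow \supp(c^{}_F) = \supp(c^{}_G)$ is immediate from the definitions. Indeed, since $F$ and $G$ are homogeneous, the value $F(\mathsf{P}_i)$ vanishes precisely when the projective point represented by $\mathsf{P}_i$ lies in $V(F)$, so under the natural bijection between the chosen representatives $\{\mathsf{P}_1,\ldots,\mathsf{P}_{p^{ }_m}\}$ and the points of $\PP^m(\Fq)$, we get $\supp(c^{}_F) = \PP^m(\Fq) \setminus V(F)$ and similarly for $G$. Likewise, the implication $c^{}_F = c^{}_{\lambda G}$ for some nonzero $\lambda \in \Fq$ $\Longrightarrow$ $\supp(c^{}_F) = \supp(c^{}_G)$ is trivial since multiplication by a unit preserves the set of nonzero coordinates of a vector.

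The one implication that requires the hypothesis is: $\supp(c^{}_F) = \supp(c^{}_G)$ $\Longrightarrow$ $c^{}_F = c^{}_{\lambda G}$ for some nonzero $\lambda \in \Fq$. I would use the standard "scale and subtract" trick. Since $c^{}_F$ is a nonzero codeword, its support is nonempty, so we may pick any $\mathsf{P}_i$ in the common support and set
$$\lambda \; := \; F(\mathsf{P}_i)\,G(\mathsf{P}_i)^{-1} \in \Fq \setminus \{0\},$$
which is well defined since $G(\mathsf{P}_i)\ne 0$ by assumption. Then the codeword $c^{}_F - c^{}_{\lambda G} = c^{}_{F-\lambda G}$ has support contained in $\supp(c^{}_F) \cup \supp(c^{}_{\lambda G}) = \supp(c^{}_F)$, but does not contain $\mathsf{P}_i$, so its Hamming weight is strictly less than that of $c^{}_F$. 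Since $c^{}_F$ is a minimum weight codeword, $c^{}_{F-\lambda G}$ must be the zero codeword, i.e., $c^{}_F = c^{}_{\lambda G}$ with $\lambda\ne 0$, as required.

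I do not expect a real obstacle; the only mild point to bear in mind is that the conclusion is phrased at the codeword level rather than at the polynomial level, since $F$ and $\lambda G$ need only agree on the finitely many chosen representatives in $\Fq^{m+1}$, not as polynomials in $\Fq[X_0,\ldots,X_m]$. This is precisely the right formulation for the characterization of minimum weight codewords that the next portion of the paper will require.
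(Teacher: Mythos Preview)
Your proof is correct and follows essentially the same approach as the paper: the first equivalence and the backward implication of the second are dismissed as obvious, and the forward implication is handled by picking a point $\mathsf{P}_\nu$ in the common support, choosing $\lambda$ so that $(F-\lambda G)(\mathsf{P}_\nu)=0$, and noting that $c^{}_{F-\lambda G}$ then has strictly smaller weight than $c^{}_F$, forcing it to be zero. Your remark that the conclusion is at the codeword level rather than the polynomial level is apt and consistent with how the lemma is used later.
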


\begin{proof}
The first equivalence as well as the reverse implication in the second equivalence is obvious.  
Suppose $\supp \left( c^{}_F \right) = \supp \left( c^{}_G \right)$.  
Pick any  ${\mathsf{P}}_{\nu}\in \supp \left( c^{}_F \right)$. Then 
 $F({\mathsf{P}}_{\nu}) \ne 0 \ne G({\mathsf{P}}_{\nu})$, and so 
 there is $\lambda \in \Fq$ with $\lambda \ne 0$ such that 
$F({\mathsf{P}}_{\nu}) = \lambda \, G({\mathsf{P}}_{\nu})$.  Consequently, 
$\supp( c^{}_{F-\lambda G}) \subseteq \supp (c^{}_F) \setminus \{ {\mathsf{P}}_{\nu}\}$.
Since $c^{}_F$ is a minimum weight codeword,  it follows that $c^{}_{F-\lambda G} =0$, that is,   $c^{}_F = c^{}_{\lambda G}$.
\end{proof}


We are now ready to prove the main result of this section.

\begin{thm}\label{prm:charmin}
Assume that $1\le d\le m(q-1)+1$ and let $\, t , s$ be unique integers such that 
$d-1=t(q-1)+s$ with $t\ge 0$ and $0\le s< q-1$.
Then $c\in \Fq^{{p^{}_m}} $ is a minimum weight codeword of $\prm_q(d,m)$ if and only if $c=c^{}_Q$ for some 
$Q\in \Fq[X_0,\ldots,X_m]_d$ of the form
\begin{eqnarray}\label{eqprm:min}
Q= L_t\prod_{i=0}^{t-1} \left(L_t^{q-1}-L_i^{q-1}\right)\prod_{j=1}^{s}(L_{t+1}-\omega_j L_t)
\end{eqnarray}
where  
$L_0,\ldots, L_{t+1}$ are linearly independent homogeneous linear polynomials in  $\Fq[X_0,\ldots,X_m]$ and $\omega_1,\ldots,\omega_s$ are distinct elements of $\Fq$. [Here it is understood that if $s=0$, then the second product in \eqref{eqprm:min} is empty (and hence equal to $1$) and in this case only the existence and linear independence of $L_0, \ldots, L_{t}$ is asserted.]
\end{thm}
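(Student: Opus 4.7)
The plan has two parts. For \emph{sufficiency}, I verify directly that any $Q$ of the form \eqref{eqprm:min} has $|V(Q)| = p_m - (q-s)q^{m-t-1}$, which matches the minimum distance of Corollary \ref{mindis} and so makes $c_Q$ a minimum weight codeword. Extending the given linearly independent linear forms to a basis of $\Fq[X_0,\ldots,X_m]_1$ and setting $Y_i := L_i(P)$ as new projective coordinates, the key identity is that for $x, y \in \Fq$ with $y \ne 0$, $x^{q-1} = y^{q-1}$ iff $x \ne 0$. Therefore $Q(P) \ne 0$ is equivalent to $L_t(P) \ne 0$, $L_i(P) = 0$ for $0 \le i \le t-1$, and $L_{t+1}(P) \ne \omega_j L_t(P)$ for each $j$ (the last condition being vacuous when $s = 0$). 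Normalising $L_t(P) = 1$ then yields exactly $(q-s)q^{m-t-1}$ such points.

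For \emph{necessity}, take a minimum weight codeword $c = c_F$ with $F \in C_q(d,m)$ projectively reduced and nonzero, so $|V(F)| = p_m - (q-s)q^{m-t-1}$. The argument in Case~2 of the proof of Theorem \ref{maxzero} yields a strict inequality, so equality in \eqref{eq:mainineq} forces Case~1, where $V(F)$ coincides with the nonempty union of hyperplanes contained in it. After a linear change of coordinates I may assume $V(X_m) \subseteq V(F)$; by Lemma \ref{lem:newlem3}, $F = X_m G$ for some $G \in \Fq[X_0,\ldots,X_m]_{d-1}$. Setting $g(X_0,\ldots,X_{m-1}) := G(X_0,\ldots,X_{m-1},1)$, the identity $|V(F)| = p_{m-1} + |Z(g)|$ forces $|Z(g)| = q^m - (q-s)q^{m-t-1}$, so $\mathrm{ev}(g)$ is a minimum weight codeword of $\RM_q(d-1,m)$.

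Applying the Delsarte-Goethals-MacWilliams characterisation \eqref{eq:minwtrm} with $d-1 = t(q-1) + s$, I obtain $g = \omega_0 \prod_{i=1}^{t}(1 - \ell_i^{q-1}) \prod_{j=1}^{s}(\ell_{t+1} - \omega_j)$ for $\omega_0 \in \Fq^*$, distinct $\omega_j \in \Fq$, and degree-$1$ polynomials $\ell_1,\ldots,\ell_{t+1}$. Letting $L_i$ be the $X_m$-homogenisation of $\ell_i$ and using the fact that $\deg g = d - 1$, one obtains $G = \omega_0 \prod_{i=1}^{t}(X_m^{q-1} - L_i^{q-1}) \prod_{j=1}^{s}(L_{t+1} - \omega_j X_m)$. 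Using $\omega_0^{q-1} = 1$ and setting $L_t' := \omega_0 X_m$, $L_k' := L_{k+1}$ for $0 \le k \le t-1$, $L_{t+1}' := L_{t+1}$, and $\omega_j' := \omega_j/\omega_0$, the polynomial $F = X_m G$ takes exactly the form \eqref{eqprm:min}; reversing the initial change of coordinates preserves this form.

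The main obstacle is to verify linear independence of $L_0',\ldots,L_{t+1}'$, equivalently of $\{X_m, L_1,\ldots,L_{t+1}\}$. Writing each $\ell_i = b_i + \lambda_i$ with $\lambda_i$ its linear part, the minimum-weight count on $g$ forces $\lambda_1,\ldots,\lambda_{t+1}$ (respectively $\lambda_1,\ldots,\lambda_t$ when $s = 0$) to be linearly independent in $\Fq^m$; otherwise $V(\ell_1,\ldots,\ell_t) \subseteq \Fq^m$ has dimension greater than $m - t$ or is empty, or $\ell_{t+1}$ is constant on it, and in each case a direct count violates $|Z(g)| = q^m - (q-s)q^{m-t-1}$. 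Any relation $c_0 X_m + \sum_{i=1}^{t+1} c_i L_i = 0$ then splits into $\sum c_i \lambda_i = 0$ on the $X_0,\ldots,X_{m-1}$ coefficients, forcing all $c_i = 0$, and then $c_0 = 0$ on the $X_m$-coefficient. Distinctness of the $\omega_j' = \omega_j/\omega_0$ follows immediately from distinctness of the $\omega_j$ and $\omega_0 \in \Fq^*$.
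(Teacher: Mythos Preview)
Your overall strategy---find a hyperplane in $V(F)$, dehomogenize, invoke the Delsarte--Goethals--MacWilliams characterization for $\RM_q(d-1,m)$, and rehomogenize---is exactly the paper's approach. However, there is a genuine gap in the passage from $g$ back to $G$. The DGM theorem only tells you that $\mathrm{ev}(g)=\mathrm{ev}(f)$ for some $f$ of the form \eqref{eq:minwtrm}; it does \emph{not} give $g=f$ as polynomials. Consequently your claim ``$\deg g = d-1$'' is unjustified (indeed $G$ may contain monomials divisible by $X_m$, so $g=G(X_0,\ldots,X_{m-1},1)$ can have degree strictly less than $d-1$), and the assertion ``one obtains $G=\omega_0\prod\cdots$'' does not follow. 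The statement you are proving only requires $c=c_Q$ for \emph{some} $Q$ of the form \eqref{eqprm:min}, not $F=Q$; the fix is to set
\[
Q \; := \; X_m\cdot X_m^{\,d-1} f\!\left(\tfrac{X_0}{X_m},\ldots,\tfrac{X_{m-1}}{X_m}\right)
\]
and argue $c_F=c_Q$ directly: both vanish on $V(X_m)$, and on the affine chart $X_m=1$ they agree because $\mathrm{ev}(g)=\mathrm{ev}(f)$. This is precisely what the paper does. (The paper first shows, via Theorem~\ref{maxzero} and Lemma~\ref{lem:Lem-cFcG}, that one may assume $F$ is a product of $d$ pairwise linearly independent linear forms, which guarantees $\deg f_1=d-1$; with the fix above, that preliminary step is not needed, so your shortcut is legitimate once the gap is repaired.)

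A second, smaller point: your invocation of the Case~1/Case~2 dichotomy in Theorem~\ref{maxzero} only covers $1\le d\le m(q-1)$. The paper treats the boundary case $d=m(q-1)+1$ (where $t=m$, $s=0$, and the minimum distance is $1$) separately. Your argument can absorb it---since then $V(F)=\PP^m(\Fq)\setminus\{P\}$ contains any hyperplane missing $P$---but this should be said explicitly.

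Finally, your linear-independence argument via the linear parts $\lambda_i$ of the $\ell_i$ is a valid alternative to the paper's argument (which works instead with points in $\ssupp(c_Q)$). Note that DGM only asserts that the $\ell_i$ are linearly independent as degree-$1$ polynomials, which does not immediately give independence of the $\lambda_i$; your derivation of the latter from the weight count is correct and is the right way to obtain it.
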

\begin{proof}

If $Q$ is as in \eqref{eqprm:min}, then it is clear that 
$Q\in \Fq[X_0,\ldots,X_m]_d$ and moreover, by a linear change of coordinates, 
we see that $|V(Q)|=|V(G)|$, where $G$ is as in \eqref{eq:MinimalG}. Thus Corollary \ref{mindis} implies that $c^{ }_Q$  is a minimum weight codeword of $\prm_q(d,m)$.

To prove the converse,  let $F\in \Fq[X_0,\ldots,X_m]_d$ be a nonzero polynomial such that $|V(F)|=p^{ }_m-(q-s)q^{m-t-1}$.  
We need to show that there exists some $Q\in \Fq[X_0,\ldots,X_m]_d$ of the form 
\eqref{eqprm:min} such that $c^{ }_F = c^{ }_Q$.  
To this end,  let us first consider the extreme case when $d = m(q-1)+1$ so that $t=m$ and $s=0$.  
In this case  $|V(F)|=p^{ }_m-1$ and so there is unique $\nu \in \{1, \dots , p^{ }_m\}$ such that 
$F({\mathsf{P}}_{\nu}) \ne 0$ and $F(\mathsf{P}_{\mu})=0$ for all $\mu \in \{1, \dots , p^{ }_m\}$ with $\mu \ne \nu$.   Thus if $F_{\nu}$ is as in \eqref{eqprm:Fnu},  then 
by Lemma~\ref{lem:Lem-cFcG}, $c^{ }_F = c^{ }_{\omega_0 F_{\nu}}$ 
for some $\omega_0\in \Fq$ with $\omega_0 \ne 0$.  Now,  if as in 
the proof of Lemma~\ref{lemp3},  we write  ${\mathsf{P}}_{\nu} = (a^{ }_{0},\dots,a^{ }_{j_{\nu}-1},1,0,\ldots,0)$,  and we let 
$$
L_t = L_m =   \omega_0 X_{j_{\nu}} \quad \text{and} \quad 
L_i = \begin{cases}  
(X_i-a^{ }_{i}X_{j_\nu}) & \text{if } 0\le i < j_{\nu}, \\
X_{i+1} & \text{if } j_{\nu}\le i \le m-1,
\end{cases}
$$
then $L_0, \dots , L_t$ are linearly independent and it is clear from \eqref{eqprm:Fnu} that $F$ 
is as in   \eqref{eqprm:min}. 

Thus,  we will now assume that $1\le d \le m(q-1)$.  Then $0\le t \le m-1$ and $0\le s \le q-2$. 
This implies that 
\begin{equation}
\label{eq:difftwo}
p^{ }_m-|V(F)| = (q-s)q^{m-t-1} \ge 2.
\end{equation}
Further, since $|V(F)|$ attains the bound in \eqref{eq:mainineq},  we see from Theorem~\ref{maxzero} that there is a positive integer $r\le d$ and homogeneous linear polynomials 
$G_1, \dots , G_r$ in $\Fq[X_0,\ldots,X_m]$ such that the corresponding hyperplanes $V(G_1), \dots , V(G_r)$ are distinct (i.e.,  $G_1, \dots , G_r$ are pairwise linearly independent) and 
$$
V(F) = \bigcup_{i=1}^r V(G_i).
$$
We claim that $r=d$. To see this,  assume the contrary,  i.e.,  suppose 
$1\le r < d$. 
In view of \eqref{eq:difftwo},  there are $A, B \in \PP^m(\Fq)\setminus V(F)$ such that $A\ne B$.  Hence by Lemma~\ref{lem:LineLem},  there is a homogeneous linear polynomial $L$ in $ \Fq[X_0, \dots , X_m]$ such that $L(A)=0$ and $L(B)\ne 0$.  Now consider $G:= L^{d-r}G_1\cdots G_r$.  Then
$G\in \Fq[X_0,\ldots,X_m]_d$ and $G(P) = 0$ for all $P \in V(F)\cup \{A\}$.  Also $c^{}_G$ 
is a nonzero codeword of $\prm_q(d,m)$ since $G(B)\ne 0$.  But the Hamming weight of $c^{}_G$ is smaller than that of $c^{}_F$ since $|V(G)| > |V(F)|$.  This contradicts the assumption that $c^{}_F$ is a minimum weight codeword.  Hence the claim that $r=d$ is proved. 
%

Now let $G:=G_1\cdots G_d$.  Then   $V(F) = V(G)$.   Hence by Lemma \ref{lem:Lem-cFcG}, 
there is $\lambda \in \Fq$ with $\lambda \ne 0$ such that 
$c^{}_F = c^{}_{\lambda G}$.  
%
We can thus assume without loss of generality that $F$ is a product of $d$ pairwise linearly independent homogeneous linear factors, and moreover by a linear change of coordinates, we may assume that one of the linear factors is $X_m$.  Consequently, we can write $F=X_mF_1$ for some $F_{1} \in  \Fq[X_0,\ldots,X_m]_{d-1}$ such that $F_1$ is not divisible by $X_m$. 
Thus,  if we let $f_1(X_0, \dots , X_{m-1}) = F_1(X_0, \dots , X_{m-1}, 1) $,  then 
$f_1$ is a polynomial in $\Fq[X_0,\ldots,X_{m-1}]$ of degree ${d-1}$.  Moreover, by intersecting $V(F)$ with the hyperplane $V(X_m)$ and its complement in $\PP^m(\Fq)$,  we see that
$$
p^{ }_m - (q-s)q^{m-t-1} = |V(F)| = p^{ }_{m-1} + |Z(f_1)|  \ \text{ and so } \
|Z(f_1)|  = q^m - (q-s)q^{m-t-1}.
$$
It follows that $\mathrm{ev}(f_1)$ can be viewed as a minimum weight codeword of the Reed-Muller code $\RM_q(d-1,m)$. Hence by \cite[Thm. 2.6.3]{DGM}, 
$\mathrm{ev}(f_1) = \mathrm{ev}(f)$ for some $f\in \Fq[X_0,\ldots,X_{m-1}]$ of the form \eqref{eq:minwtrm}, i.e., 
\begin{equation*}
f =  \omega_0\left(\prod_{i=1}^{t}(1-\ell_i^{q-1})\right)  \prod_{j=1}^{s} (\ell_{t+1}-\omega'_j)
\end{equation*}
for some linearly independent polynomials $\ell_1,\ldots,\ell_{t+1} \in \Fq[X_0,\ldots,X_{m-1}]$ of degree~$1$, and some  
$\omega_0, \omega'_1, \dots , \omega'_s \in \Fq$ 
with $\omega_0\neq 0$ and $\omega'_1, \dots , \omega'_s$ distinct.  Now 
$$
F(X_0,\ldots,X_m)=
X_m^{d}F_1\bigg(\frac{X_0}{X_m},\ldots,\frac{X_{m-1}}{X_m},1\bigg)\
= X_m\bigg(\! X_m^{d-1}f_1\bigg(\frac{X_0}{X_m},\ldots,\frac{X_{m-1}}{X_m}\bigg)\bigg)
$$
and so if we let 
$$
Q(X_0,\ldots,X_m):=X_m\left(X_m^{d-1}f\left(\frac{X_0}{X_m},\ldots,\frac{X_{m-1}}{X_m}\right)\right),
$$
then we see that $Q \in  \Fq[X_0,\ldots,X_m]_{d}$ and $F({\mathsf{P}}_{\nu})=Q({\mathsf{P}}_{\nu})$ for each  $\nu =1,\dots , p^{}_m$, and hence $c^{}_F = c^{}_Q$. Moreover if 
 we let 
$$
L_i(X_0,\ldots,X_m)
:=X_m\left(\ell_{i+1} \left(\frac{X_0}{X_m},\ldots,\frac{X_{m-1}}{X_m}\right)\right) \quad
\text{for $i=0,1,\dots , t-1$,}
$$
and also let $L_t := \omega_0 X_m$ and in case $s>0$,  we further let 
$$
L_{t+1}(X_0,\ldots,X_m)
:=X_m\left(\ell_{t+1} \left(\frac{X_0}{X_m},\ldots,\frac{X_{m-1}}{X_m}\right)\right) , 
$$
then we 
see that $L_1, \dots ,  L_{t+1}$ are nonzero homogeneous linear polynomials in 
$\Fq[X_0,\ldots,X_{m}]$  and moreover 
$$
Q= L_t\prod_{i=0}^{t-1} \left(L_t^{q-1}-L_i^{q-1}\right)\prod_{j=1}^{s}(L_{t+1}-\omega_j L_t)
\quad
\text{where} \quad \omega_j := \frac{\omega'_j}{\omega_0}  \text{ for } j=1,\dots , s. 
$$
It remains to show that $L_0, \dots ,  L_{t+1}$ are linearly independent, or more precisely,  
if $s=0$, then $L_0, \dots ,  L_{t}$ are linearly independent, 
whereas if $s>0$, then $L_0, \dots ,  L_{t+1}$ are linearly independent. 
To this end,  first suppose
$s=0$.   Since $\ell_1,\ldots,\ell_{t}$ are linearly independent, 
it is clear that $L_0,\ldots,L_{t-1}$ are linearly independent.  Now if $L_0, \dots ,  L_{t}$ were  linearly dependent,  then $X_m$ would be a $\Fq$-linear combination of $L_0, \dots ,  L_{t-1}$.
On the other hand, since $|V(Q)|= |V(F)|= p^{ }_m - q^{m-t}  < p^{ }_m $ , there is some 
$A\in \PP^m(\Fq)$ such that $Q(A)\ne 0$.  In particular,  $L_t(A)\ne 0$, but 
$L_i(A) =0$ for $i=0,1,\dots , t-1$,  which shows that $X_m$ cannot be a $\Fq$-linear combination of $L_0, \dots ,  L_{t-1}$.  

Next, suppose $s>0$.  As before,  the linear independence of  $\ell_1,\ldots,\ell_{t+1}$ shows that $L_0 , \dots ,  L_{t-1}, L_{t+1}$ are linearly independent.    Fix an arbitrary element $A$ of 
$\PP^m(\Fq) \setminus V(Q)$.  Since $Q(A)\ne 0$, 
we must have $L_t(A)\ne 0$ and 
$L_i(A) =0$ for $i=0,1,\dots , t-1$.  In particular, we can find a unique representative of $A$ in 
$\Fq^{m+1}$,  say $\mathsf{A}= (\mathsf{a}_0, \mathsf{a}_1, \dots ,\mathsf{a}_m)$,  such that $\mathsf{a}_m=1$.
 Suppose,  if possible,
$X_m$ is  a $\Fq$-linear combination of $L_0, \dots ,  L_{t-1},  L_{t+1}$.  Then  $L_{t+1}(\mathsf{A})$ is necessarily equal to a fixed nonzero scalar in $\Fq$,  say $\lambda$, which is independent of the choice of $A$ in $\PP^m(\Fq) \setminus V(Q)$.
%
 It follows that  $|\PP^m(\Fq) \setminus V(Q)|$ is at most 
$$
| \{ \mathbf{a}=(a_0, \dots , a_{m-1}) \in \Fq^m : 
\ell_i (\mathbf{a})=0 \text{ for } i=1,\dots ,t \text{ and } 
   \ell_{t+1}(\mathbf{a})=\lambda \} | = q^{m-t-1}, 
   $$
    where the last equality follows since $\ell_1,\ldots,\ell_{t+1}$ are linearly independent polynomials of degree $1$.  But then 
    $$
(q-s)q^{m-t-1}  =    |\PP^m(\Fq) \setminus V(Q)| \le q^{m-t-1},
$$
which is a contradiction since $s < q-1$.  This shows that $L_0, \dots ,  L_{t+1}$ are linearly independent. 
\end{proof}

\section{Enumeration of Minimum Weight Codewords}
\label{sec:enumMinWt}

In this section,  we will use Theorem~\ref{prm:charmin} to prove the following.

\begin{thm}\label{thm:enum}
Assume that $1\le d\le m(q-1)+1$ and let $t, s$ be unique integers such that 
$d-1=t(q-1)+s$ with $t\ge 0$ and $0\le s< q-1$.
Then the number of  minimum weight codewords of $\prm_q(d,m)$ is equal to 
\begin{equation}\label{NoMinWtPRM}
\left( q^{m+1} - 1 \right){{m}\brack{t}}_q N_s, 
\quad \text{where} \quad 
N_s=\begin{cases}
 1 & {\rm if}\ s=0,\\
\displaystyle{\frac{1}{s+1}\binom{q}{s} {\stirling{m-t}{1}}_q} & {\rm if}\ s\ne 0.
 \end{cases}
\end{equation}
\end{thm}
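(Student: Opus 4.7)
The strategy is: (i) by Theorem~\ref{prm:charmin} together with Lemma~\ref{lem:Lem-cFcG}, reduce counting minimum weight codewords to counting their supports, with an overall factor of $q-1$; (ii) set up a bijection between supports and certain geometric data inside $\PP^m(\Fq)$; (iii) enumerate this data and convert the count into \eqref{NoMinWtPRM} by applying standard Gaussian binomial identities.

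Concretely, by Theorem~\ref{prm:charmin}, every minimum weight codeword has the form $c^{ }_Q$ with $Q$ as in \eqref{eqprm:min}. Using the identity $L_t^{q-1}-L_i^{q-1}=\prod_{\alpha\in\Fq^*}(L_t-\alpha L_i)$ (derived from $X^{q-1}-1=\prod_{\alpha\in\Fq^*}(X-\alpha)$), the support $S^{ }_Q:=\PP^m(\Fq)\setminus V(Q)$ consists exactly of those $P\in\PP^m(\Fq)$ satisfying $L_0(P)=\cdots=L_{t-1}(P)=0$, $L_t(P)\ne 0$, and $L_{t+1}(P)/L_t(P)\notin\{\omega_1,\dots,\omega_s\}$. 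Hence $S^{ }_Q$ lies in the $(m-t)$-dimensional projective subspace $\Lambda_0:=V(L_0,\dots,L_{t-1})$, and its complement in $\Lambda_0$ is the union of $s+1$ distinct hyperplanes $H_0:=V(L_t)\cap\Lambda_0$ and $H_j:=V(L_{t+1}-\omega_j L_t)\cap\Lambda_0$ for $j=1,\dots,s$. When $s>0$, these hyperplanes all share the common codimension-$2$ subspace $\Lambda_*:=V(L_0,\dots,L_{t+1})$ of $\Lambda_0$.

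The central step, which I expect to be the main obstacle, is to show that the data $(\Lambda_0,\{H_0,\dots,H_s\})$ is uniquely recovered from $S^{ }_Q$ (and that every admissible datum arises this way). The subspace $\Lambda_0$ is the projective span of $S^{ }_Q$: in the affine chart $\{L_t\ne 0\}$ of $\Lambda_0$, the set $S^{ }_Q$ is a union of $q-s\ge 2$ parallel affine hyperplanes, which affinely span the whole $(m-t)$-dimensional affine space. The uniqueness of the decomposition $\Lambda_0\setminus S^{ }_Q=H_0\cup\cdots\cup H_s$ follows from an incidence estimate: any hyperplane $H$ of $\Lambda_0$ contained in $H_0\cup\cdots\cup H_s$ but distinct from every $H_j$ satisfies $|H\cap H_j|\le p^{ }_{m-t-2}$ for each $j$, giving $p^{ }_{m-t-1}=|H|\le(s+1)p^{ }_{m-t-2}$, which contradicts the assumption $s\le q-2$. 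Consequently $\Lambda_*=H_0\cap\cdots\cap H_s$ is also uniquely recovered when $s>0$. Conversely, any admissible datum $(\Lambda_0,\Lambda_*,\{H_0,\dots,H_s\})$ arises from some valid tuple by choosing defining forms cutting out $\Lambda_0$ and $\Lambda_*$ and selecting one of the $H_j$'s to play the role of $V(L_t)\cap\Lambda_0$ in the pencil of $q+1$ hyperplanes of $\Lambda_0$ through $\Lambda_*$.

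By Lemma~\ref{lem:Lem-cFcG}, the minimum weight codewords sharing a fixed support form an orbit of size $q-1$ under multiplication by $\Fq^*$, so the total count equals $(q-1)$ times the number of supports. The enumeration is then direct: there are ${{m+1}\brack{t}}_q$ choices of $\Lambda_0$; for $s=0$, there are ${{m-t+1}\brack{1}}_q=p^{ }_{m-t}$ choices of $H_0$; for $s>0$, there are ${{m-t+1}\brack{2}}_q$ choices of $\Lambda_*$ followed by $\binom{q+1}{s+1}$ choices of $s+1$ out of the $q+1$ hyperplanes of $\Lambda_0$ through $\Lambda_*$. Multiplying by $q-1$ and applying the standard identities ${{m+1}\brack{t}}_q(q^{m-t+1}-1)=(q^{m+1}-1){{m}\brack{t}}_q$, $\binom{q+1}{s+1}=\tfrac{q+1}{s+1}\binom{q}{s}$, and $(q^2-1){{m-t+1}\brack{2}}_q=(q^{m-t+1}-1){{m-t}\brack{1}}_q$ then yields \eqref{NoMinWtPRM} in both cases.
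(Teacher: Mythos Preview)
Your argument is correct, and for the case $s=0$ it is essentially identical to the paper's Lemma~\ref{lem:EnumZero-s}: the pair $(\Lambda_0,H_0)$ is exactly the pair $(E,H)$ of the incidence set $\mathcal{I}$ there, and both proofs show that the support is recovered as $E\setminus H$.

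For $s>0$ your approach is genuinely different from the paper's and considerably shorter. The paper parameterizes by quadruples $(E,L_t,L_{t+1},S)\in\mathcal{J}$ involving the actual linear forms, and then devotes four lemmas (Lemmas~\ref{lem:PsiLem1}--\ref{lem:PsiLem4}) to computing the fibre cardinality $(s+1)(q-1)^2q^{2t+1}$ of the surjection $\Psi:\mathcal{J}\to\Phi(\mathcal{M})$; the final count comes from dividing $|\mathcal{J}|$ by this fibre size. You bypass this entirely by passing at once to the \emph{intrinsic} geometric datum $(\Lambda_0,\{H_0,\dots,H_s\})$, establishing a bijection with the set of supports via the two observations that $\Lambda_0$ is the projective span of $S_Q$ and that the hyperplane components of $\Lambda_0\setminus S_Q$ are uniquely determined (your incidence bound $p_{m-t-1}>(s+1)p_{m-t-2}$ for $s\le q-2$). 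The resulting enumeration in terms of $\stirling{m+1}{t}_q$, $\stirling{m-t+1}{2}_q$, and $\binom{q+1}{s+1}$ is then immediate. What the paper's approach buys is perhaps a more mechanical verification that different tuples $(L_0,\dots,L_{t+1},\omega_1,\dots,\omega_s)$ can yield the same support, which may be instructive; what your approach buys is brevity and a clean geometric picture that makes the factor $\binom{q+1}{s+1}$ (and hence the integrality observed in Remark~\ref{rmk:Nathan}) transparent from the outset.
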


Before giving a proof of this result, it will be convenient to set some notation and terminology, and also some preliminary results which will be useful to us in the sequel. First, let us fix $d, t$ and $s$ as in the statement of Theorem~\ref{thm:enum}. 
We shall denote by $\M_q(d,m)$, and often simply by $\M$, the set of all  minimum weight codewords of $\prm_q(d,m)$. The polynomial ring $\Fq[x_0, \dots, x_m ]$ in $m+1$ variables with coefficients in $\Fq$ will be denoted by $R$, and for any nonnegative integer $e$, we denote by $R_e$ the space of homogeneous polynomials in $R$ of degree $e$, including the zero polynomial. In particular, $R_1$ is the $(m+1)$-dimensional $\Fq$-vector space of homogeneous linear polynomials in $R$. 
We shall use a slight variant of the notion of support of a codeword of $\prm_q(d,m)$. Namely, given any $F\in R_d$, we let
$$
\ssupp (c^{ }_F) = \{P\in \PP^m(\Fq) : F(P)\ne 0\} = \PP^m(\Fq) \setminus V(F).
$$
Thus, the only difference between $\ssupp (c^{ }_F)$ and $\supp (c^{ }_F)$ is that the former comprises of points of 
$\PP^m(\Fq)$ rather than their chosen representatives in $\Fq^{m+1}$. Let $\mathcal{P}(\PP^m)$ denote the set of all subsets of $\PP^m(\Fq)$. 
In view of Lemma~\ref{lem:Lem-cFcG}, we see that the map 
$\Phi : \M \to \mathcal{P}(\PP^m)$ defined by $ \Phi(c):= \ssupp (c)$
has the property that 
\begin{equation}\label{eq:MphiM}
	|\M| = (q-1) |\Phi(\M)|.
\end{equation}
Continuing with notations, for any $e\in \Z$ with 
$-1\le e\le m$, we denote by $\G_e(\PP^m)$ the Grassmannian consisting of all $e$-dimensional projective linear subspaces\footnote{As per the standard conventions, the empty subset of $\PP^m(\Fq)$, which corresponds to the zero subspace of $\Fq^{m+1}$, is the only projective linear subspace of $\PP^m(\Fq)$ of dimension $-1$.} of $\PP^m(\Fq)$. It is elementary and well-known that the cardinality of $\G_e(\PP^m)$ is the Gaussian binomial coefficient ${{m+1}\brack {e+1}}_q$. Here is a useful preliminary result. 

\begin{lemma}\label{lem:thetaTrick}
	Let 
	$K_0, \dots , K_r \in R_1$ be linearly independent. Then the map 
	$$
	\theta : \Fq^{m+1} \to \Fq^{r+1}   \quad \text{defined by} \quad \theta(\mathbf{a}):= \left(K_0(\mathbf{a}),  \dots , K_r(\mathbf{a}) \right)
	$$
	is surjective. 
\end{lemma}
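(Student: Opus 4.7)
The plan is to recognize that $\theta$ is simply an $\F_q$-linear map between finite-dimensional vector spaces, so surjectivity is equivalent to the rank of the associated matrix being $r+1$, and this rank condition is exactly the hypothesis of linear independence.

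In more detail, first I would write each $K_j$ in terms of the standard basis as $K_j = \sum_{i=0}^{m} c_{ji} x_i$ with $c_{ji} \in \F_q$, so that $\theta(\mathbf{a}) = C \mathbf{a}^{T}$, where $C = (c_{ji})$ is the $(r+1) \times (m+1)$ matrix whose rows are the coefficient vectors of $K_0, \dots , K_r$. Next I would observe that the assignment sending a homogeneous linear polynomial in $R_1$ to its coefficient vector in $\F_q^{m+1}$ is an $\F_q$-linear isomorphism, so the linear independence of $K_0, \dots , K_r$ translates into the linear independence of the rows of $C$. In particular this forces $r+1 \le m+1$, and $C$ has rank $r+1$.

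Finally, since $\theta$ is $\F_q$-linear and its matrix $C$ has rank $r+1$, the image $\theta(\F_q^{m+1})$ is an $\F_q$-subspace of $\F_q^{r+1}$ of dimension $r+1$, hence equal to $\F_q^{r+1}$. This proves surjectivity. There is really no obstacle here; the lemma is a direct consequence of the rank-nullity theorem applied to the linear map $\theta$, and the only thing worth stating carefully is the identification of linear independence of $K_0, \dots , K_r$ as elements of $R_1$ with the linear independence of the rows of $C$.
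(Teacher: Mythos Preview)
Your proof is correct and follows essentially the same approach as the paper: both recognize $\theta$ as an $\Fq$-linear map and use the linear independence hypothesis to conclude it has full rank (the paper does this by extending $K_0,\dots,K_r$ to a basis of $R_1$ and noting the resulting map $\widehat{\theta}:\Fq^{m+1}\to\Fq^{m+1}$ is an isomorphism, whereas you work directly with the coefficient matrix). The underlying linear algebra is identical.
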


\begin{proof}
	Extend $\{K_0,  \dots , K_r\}$ to a $\Fq$-vector space basis $\{K_0,  \dots , K_m\}$ of $R_1$. Then the map 
$	\widehat{\theta} : \Fq^{m+1} \to \Fq^{m+1}$ defined by $\widehat{\theta}(\mathbf{a}):= (K_0(\mathbf{a}),  \dots , K_m(\mathbf{a}))$ is an $\Fq$-vector space  isomorphism. In particular, $\widehat{\theta}$ is surjective, and hence so is $\theta$. 
\end{proof}

The following lemma will prove Theorem~\ref{thm:enum} in the special case when $s=0$.

\begin{lemma}\label{lem:EnumZero-s}
	Assume that $s=0$ so that $d= t(q-1)+1$ with $0\le t \le m$. Then 
	$$
	|\M_q(d,m)| = \left( q^{m+1} - 1 \right){{m}\brack{t}}_q.
	$$
\end{lemma}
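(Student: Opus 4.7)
The plan is to combine Theorem~\ref{prm:charmin} (in the case $s=0$) with the $(q-1)$-to-one correspondence from \eqref{eq:MphiM} between minimum weight codewords and their supports. By Theorem~\ref{prm:charmin}, every $c\in \M_q(d,m)$ equals $c^{}_Q$ for some
$$
Q = L_t \prod_{i=0}^{t-1}\bigl(L_t^{q-1} - L_i^{q-1}\bigr),
$$
where $L_0,\dots,L_t \in R_1$ are linearly independent. I would first identify $\ssupp(c^{}_Q)$ explicitly. Using the fact that $a^{q-1} \in \{0,1\}$ for every $a\in \Fq$, with $a^{q-1}=1$ iff $a\ne 0$, a point $P$ satisfies $L_t(P)^{q-1} \ne L_i(P)^{q-1}$ precisely when exactly one of $L_t(P), L_i(P)$ vanishes. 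Combined with the requirement $L_t(P)\ne 0$, this gives
$$
\ssupp(c^{}_Q) = \{P \in \PP^m(\Fq) : L_i(P)=0 \text{ for all } 0\le i \le t-1, \text{ and } L_t(P)\ne 0\} = W \setminus H,
$$
where $W := V(L_0,\dots,L_{t-1})$ is an $(m-t)$-dimensional projective linear subspace of $\PP^m(\Fq)$ (linear independence of $L_0,\ldots,L_{t-1}$) and $H := W\cap V(L_t)$ is a hyperplane of $W$ (since $L_t$ is linearly independent from $L_0,\dots,L_{t-1}$, so $L_t|_W\not\equiv 0$).

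Next I would show that the map sending $c^{}_Q$ to the pair $(W,H)$ gives a bijection between $\Phi(\M_q(d,m))$ and the set of pairs $(W,H)$ with $W\in \G_{m-t}(\PP^m)$ and $H\in \G_{m-t-1}(W)$. The pair is recoverable from the support: $W$ is the unique smallest projective subspace containing $W\setminus H$, and $H = W \setminus (W\setminus H)$. Conversely, given any such pair $(W,H)$, one can pick linearly independent $L_0,\dots,L_{t-1}\in R_1$ cutting out $W$ and any $L_t \in R_1$ with $V(L_t)\cap W = H$ (such an $L_t$ exists because $H$ is a hyperplane of $W$, and then $L_t$ is automatically linearly independent from $L_0,\dots,L_{t-1}$ since $L_t|_W\ne 0$). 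This realizes $W\setminus H$ as $\ssupp(c^{}_Q)$ for a valid $Q$.

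Counting then proceeds using standard Grassmannian enumeration: the number of $(m-t)$-dimensional projective subspaces of $\PP^m(\Fq)$ is ${{m+1}\brack{m-t+1}}_q = {{m+1}\brack{t}}_q$, and each such $W$ contains $p^{ }_{m-t} = (q^{m-t+1}-1)/(q-1)$ hyperplanes. Hence $|\Phi(\M_q(d,m))| = {{m+1}\brack{t}}_q \cdot (q^{m-t+1}-1)/(q-1)$, and \eqref{eq:MphiM} yields
$$
|\M_q(d,m)| = (q-1)\cdot |\Phi(\M_q(d,m))| = (q^{m-t+1}-1) {{m+1}\brack{t}}_q.
$$
The remaining step is the elementary identity $(q^{m-t+1}-1){{m+1}\brack{t}}_q = (q^{m+1}-1){{m}\brack{t}}_q$, which follows directly from the product formula for the Gaussian binomial coefficients (telescoping the factor $q^{m-t+1}-1$ into the numerator yields ${{m}\brack{t}}_q$ times the extracted $q^{m+1}-1$). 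The main conceptual step is the explicit description of the support as an affine $(m-t)$-flat; everything else is essentially bookkeeping.
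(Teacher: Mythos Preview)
Your proposal is correct and follows essentially the same approach as the paper: both set up a bijection between $\Phi(\M)$ and the incidence set of pairs $(W,H)$ with $W\in\G_{m-t}(\PP^m)$ and $H$ a hyperplane of $W$, then count the pairs and multiply by $q-1$ via \eqref{eq:MphiM}. The only noteworthy difference is in the injectivity step. The paper argues by contradiction, assuming two pairs $(E,H)\ne(E',H')$ give the same support and using Lemma~\ref{lem:thetaTrick} to manufacture a point lying in one support but not the other. You instead recover the pair intrinsically from the support by observing that $W$ is the projective span of $W\setminus H$ (equivalently, the unique smallest linear subspace containing it), after which $H=W\setminus(W\setminus H)$. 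This is a cleaner route; the only thing you should make explicit is the one-line justification that the span of $W\setminus H$ is all of $W$ (e.g., any $P\in H$ lies on a line through a point of $W\setminus H$, and that line meets $W\setminus H$ in $q\ge 2$ points, which span it). Your argument also handles the boundary cases $t=0$ and $t=m$ uniformly, whereas the paper treats $t=0$ separately.
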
 

\begin{proof} 
	If $t=0=s$, then 
	$\prm_q(d,m)$ is the simplex code of length $p^{}_m$, dimension $m+1$ and minimum distance $q^m$. In this case, every nonzero codeword is a minimum weight codeword and so the desired formula clearly holds. We now assume that $t\ge 1$. Consider the incidence set $\mathcal{I}$ and the map $\tau : \mathcal{I} \to \Phi(\M)$ defined by 
	$$
	\mathcal{I}:=\{(E,H)\in \G_{m-t}(\PP^m)\times \G_{m-t-1}(\PP^m): H\subseteq E\}
	\quad \text{and} \quad \tau(E,H):= E \cap H^c, 
	$$
	where by $H^c$ we denote the complement of $H$ in $ \PP^m(\Fq)$, i.e., $H^c:= \PP^m(\Fq) \setminus H$. 
	Note that if $(E,H) \in \mathcal{I}$, then there are linearly independent $L_0, \dots , L_t \in R_1$ 
	 such that 
	$E = V(L_0, \dots , L_{t-1})$ and $H=V(	L_0, \dots , L_t)$. Consequently,
	$$
E\cap H^c = \ssupp (c^{ }_F)\quad \text{where} \quad 
F:=L_t\prod_{i=0}^{t-1} \left(L_t^{q-1}-L_i^{q-1}\right). 
$$
Thus $\tau$ is well-defined. Moreover, Theorem~\ref{prm:charmin} shows that $\tau$ is surjective. We claim that $\tau$ is also injective. To see this,  let $(E,H), \, (E',H')\in \mathcal{I}$ be such that $\tau(E,H)= \tau(E',H')$. Suppose, if possible,  $H\ne H'$. 
Then there are linearly independent $L_0, \dots , L_t \in R_1$ and 
linearly independent $L'_0, \dots , L'_t$ in $R_1$  such that 
$E = V(L_0, \dots , L_{t-1})$, $H=V(	L_0, \dots , L_t)$, 
$E' = V(L'_0, \dots , L'_{t-1})$, and $H'=V(	L'_0, \dots , L'_t)$. 
Since $H, H'$ have the same dimension and $H\ne H'$, we must have $H\not\subseteq H'$. Hence there is some $j\in \{0,\dots , t\}$ such that $L'_j$ does not vanish on $H$. 
This implies that $L_0, \dots , L_t, L_j'$ are linearly independent. First, suppose $0\le j \le t -1$. By Lemma~\ref{lem:thetaTrick}, there exists  $\mathbf{a} \in \Fq^{m+1}$ such that 
$L_0(\mathbf{a} )= \cdots = L_{t-1}(\mathbf{a} )=0$ and $L_t(\mathbf{a} )=1 = L'_j(\mathbf{a} )$. Next, suppose $j=t$. Again, 
by Lemma~\ref{lem:thetaTrick}, there exists  $\mathbf{b} \in \Fq^{m+1}$ such that 
$L_0(\mathbf{b} )= \cdots = L_{t-1}(\mathbf{b} )=0$ and $L_t(\mathbf{b} )=1$, but $L'_t(\mathbf{b} )=0$. 
Note that these $\mathbf{a}$ and $\mathbf{b}$, when they exist, are necessarily nonzero. Thus, 
in either case, if $P$ is the point of $ \PP^m(\Fq)$ corresponding to $\mathbf{a}$ or $\mathbf{b}$  (according as $0\le j<t$ or  $j=t$), then 
$P\in E\cap H^c$, but $P\not\in E'\cap H'^c$. 
But this contradicts the assumption that $\tau(E,H)= \tau(E',H')$. So we must have $H=H'$, and therefore $E=E'$. Thus we have shown that $\tau$ is bijective. Consequently, in view of \eqref{eq:MphiM}, we obtain 
$$
|\M_q(d,m)| = (q-1)|\mathcal{I}|=(q-1){\stirling{m+1}{m-t+1}}_q{\stirling{m-t+1}{1}}_q=\left( q^{m+1} - 1 \right){{m}\brack{t}}_q,
$$
as desired.
\end{proof}

Next, we consider the case when $s\ne 0$. By Theorem~\ref{prm:charmin}, we know that a minimum weight codeword of $\prm_q(d,m)$ is of the form $c^{ }_F$, where
\begin{equation}\label{eq:TypicalF}
F= L_t\prod_{i=0}^{t-1} \left(L_t^{q-1}-L_i^{q-1}\right)\prod_{j=1}^{s}(L_{t+1}-\omega_j L_t)
\end{equation}
for some linearly independent $L_0, \dots ,  L_{t+1} \in R_1$ and some distinct $\omega_1, \dots , \omega_s \in \Fq$. The support of such a codeword is given by $V(L_t)^c \cap V(L_0, \dots , L_{t-1}) \cap W$, where 
$$
W =   \bigcap_{j=1}^s V(L_{t+1}-\omega_j L_t)^c 
=  \bigcup_{\mu} V(L_{t+1}-\mu L_t),
$$
where  $\mu$ varies over elements of $\Fq \setminus \{\omega_1, \dots , \omega_s \}$. It is, however, possible that a different choice $L'_0, \dots ,  L'_{t+1} \in R_1$ could give rise to the same support. In effect, one can replace $L_t$ by a linear combination of the form $L_t + \sum_{i=0}^{t-1} c_iL_i$ or one can even swap it with a factor of the form $(L_{t+1}-\omega_j L_t)$.  Thus one has to be a little more careful while counting the possible supports of minimum weight codewords. 
To do it systematically, we consider a set $\mathcal{J}$ and a 
map 	$\Psi$ defined as follows. 

Let $\mathcal{P}_s(\Fq)$ denote the set of all subsets of $\Fq$ containing $s$ elements, and let 
$$
\mathcal{J}:=\left\{(E,L_t,L_{t+1},S)\in \Lambda 
:E\not\subseteq V(L_t) \text{ and } E\cap V(L_t)\not\subseteq V(L_{t+1})\right\},
$$
where $\Lambda := \G_{m-t}(\PP^m) \times R_1  \times R_1 \times \mathcal{P}_s(\Fq)$. Define 
$\Psi:\mathcal{J}\longrightarrow \Phi(\mathcal{M})$	 by 
$$
\Psi(E,L_t,L_{t+1},S):= (E\setminus V(L_t))\cap \Big(\! \bigcup_{\mu\in {\Fq\setminus S}} \! V(L_{t+1}-\mu L_t) \Big) \ 
\text{ for } (E,L_t,L_{t+1},S)\in 	\mathcal{J}.
$$
Note that if $(E,L_t,L_{t+1},S)\in 	\mathcal{J}$, then there are polynomials $L_0, \dots , L_{t-1} \in R_1$ such that 
$E= V(L_0, \dots , L_{t-1})$ and $L_0, \dots , L_{t+1}$ are linearly independent, and moreover, $	\Psi(E,L_t,L_{t+1},S)=\ssupp(c^{ }_F)$, 	where $F$ is given by \eqref{eq:TypicalF}. 
This shows that $\Psi$ is well-defined. Furthermore,  Theorem \ref{prm:charmin} shows that $\Psi$ is surjective. 
The next few lemmas will help us analyze the fibres of $\Psi$. 

\begin{lemma}
\label{lem:PsiLem1}
Let  $(E,L_t,L_{t+1},S)$ and $(E',L'_t,L'_{t+1},S')$ be elements of $\mathcal{J}$ such that 
$\Psi (E,L_t,L_{t+1},S) = \Psi(E',L'_t,L'_{t+1},S')$. Then $E=E'$. 
\end{lemma}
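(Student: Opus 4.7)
The plan is to deduce $E = E'$ from a cardinality argument, using that the common image $T := \Psi(E,L_t,L_{t+1},S) = \Psi(E',L'_t,L'_{t+1},S')$ is ``too large'' to fit into any proper projective subspace of $E$ or of $E'$. First I would observe that by the very definition of $\Psi$,
$$
T \;\subseteq\; E \setminus V(L_t) \;\subseteq\; E \quad\text{and likewise}\quad T\subseteq E',
$$
so $T \subseteq E\cap E'$. Next, since $\Psi$ takes values in $\Phi(\M)$, the set $T$ is the support of a minimum weight codeword of $\prm_q(d,m)$, and so by Corollary~\ref{mindis},
$$|T| \;=\; (q-s)q^{m-t-1}.$$

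Suppose, toward a contradiction, that $E\ne E'$. Since $E$ and $E'$ both belong to $\G_{m-t}(\PP^m)$, their intersection is a projective linear subspace of $\PP^m(\Fq)$ of dimension at most $m-t-1$. Hence
$$|E\cap E'|\;\le\; p_{m-t-1}\;=\;1+q+\cdots+q^{m-t-1},$$
and combining this with $T\subseteq E\cap E'$ gives $(q-s)q^{m-t-1}\le p_{m-t-1}$. I would then rule this out as follows. Since the lemma appears in the portion of Section~\ref{sec:enumMinWt} devoted to $s\ne 0$, we have $1\le s\le q-2$ (so $q\ge 3$); moreover the bound $d\le m(q-1)+1$ together with $s\ge 1$ forces $t\le m-1$, so that $m-t-1\ge 0$. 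Since $q-s\ge 2$, it suffices to verify $2q^{m-t-1}>p_{m-t-1}$, which is equivalent to the trivially true inequality $q^{m-t-1}(q-2)+1>0$. This contradicts the displayed inequality and forces $E=E'$.

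The argument is straightforward once one sees that $T$ necessarily ``spans'' $E$ in the crude sense that $T$ cannot be contained in any projective subspace of $E$ of strictly smaller dimension. The only mild obstacle is the case-check in the final inequality, but the slack provided by $q-s\ge 2$ and the fact that $s\ne 0$ rules out the degenerate situations $q=2$ and $t=m$, making the estimate elementary. Note that the linear forms $L_t, L_{t+1}$ and the set $S$ play no role in this lemma; they will be needed in the subsequent analysis of the fibres of $\Psi$ to pin down $S$, $L_t$ and $L_{t+1}$ up to the expected ambiguities.
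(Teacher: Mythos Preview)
Your proof is correct and takes a genuinely different route from the paper's. The paper argues by contradiction as you do, but instead of a cardinality bound it works directly with defining linear forms: assuming $E\ne E'$, it picks some $L'_0$ vanishing on $E'$ but not on $E$, splits into the cases where $\{L_0,\dots,L_{t+1},L'_0\}$ is linearly independent or not, and in each case uses Lemma~\ref{lem:thetaTrick} to manufacture explicit points of $\ssupp(c^{}_F)$ on which $L'_0$ does not vanish (in the dependent case this requires choosing \emph{two} distinct values $\mu_1,\mu_2\in\Fq\setminus S$, which is where $s\le q-2$ enters). Your argument bypasses all of this by noting that $T\subseteq E\cap E'$ and that the known size $|T|=(q-s)q^{m-t-1}$ already exceeds $p_{m-t-1}$, so $E\cap E'$ cannot drop dimension. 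This is shorter and conceptually cleaner; the paper's approach, by contrast, is of a piece with the point-construction arguments used in Lemmas~\ref{lem:PsiLem2}--\ref{lem:PsiLem4}, where cardinality alone no longer suffices and one really does need Lemma~\ref{lem:thetaTrick}. Your final inequality $2q^{m-t-1}>p_{m-t-1}$ is indeed equivalent to $q^{m-t-1}(q-2)+1>0$, and the contextual restrictions $s\ne 0$ (hence $q\ge 3$ and $t\le m-1$) that you invoke are legitimate since $\mathcal{J}$ and $\Psi$ are only defined in that setting.
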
 

\begin{proof}
	By our hypothesis, we can find $L_0, \dots , L_{t-1},  L'_0, \dots , L'_{t-1} \in R_1$ such that 
	$E= V(L_0, \dots , L_{t-1})$, $E'= V(L'_0, \dots , L'_{t-1})$, 
	and the subsets $\{L_0, \dots , L_{t+1}\}$  and $\{L'_0, \dots , L'_{t+1} \}$ of $R_1$ 
	are linearly independent; moreover, if $F$ is given by \eqref{eq:TypicalF} and $F'$ is given by \eqref{eq:TypicalF} with $L_i$ replaced by $L'_i$ for $0\le i \le t+1$ and $S$ by $S'$, then 
\begin{equation}\label{eq:FFprime}
\Psi (E,L_t,L_{t+1},S) =	\ssupp(c^{}_F)=\ssupp(c^{}_{F'}) = \Psi(E',L'_t,L'_{t+1},S').
\end{equation}
	Now suppose, if possible, $E\ne E'$. Then one among $L'_0, \dots , L'_{t-1}$, say $L'_0$,  must be such that 
	$L_0, \dots , L_{t-1},	L'_0$ are linearly independent. In case $L_0, \dots , L_{t+1},	L'_0$ are also linearly independent, then by Lemma~\ref{lem:thetaTrick}, we can find $P\in E$ such that $L_t(P)\ne 0$ and $L_{t+1}(P)= \mu L_t(P)$ for some $\mu \in \Fq\setminus S$ and moreover, $L'_0(P)\ne 0$. But then $P \in  \ssupp(c^{}_F)\setminus \ssupp(c^{}_{F'})$, which contradicts \eqref{eq:FFprime}. Thus $L_0, \dots , L_{t+1},	L'_0$ are  linearly dependent, and so we can find $c_0, \dots , c_{t+1}\in \Fq$ such that
	$$
	L'_0 =c_0L_0+\ldots+c_{t-1}L_{t-1}+c_tL_t+c_{t+1}L_{t+1}  \quad \text{and} \quad c_t, c_{t+1} \text{ are not both zero.}
	$$
		Now since $|S|=s\le q-2$, there exist distinct $\mu_1,\mu_2\in \Fq\setminus S$. 
		By   Lemma~\ref{lem:thetaTrick}, we can find $P_1,P_2\in E$ such that $L_t(P_i)\ne 0$ and $L_{t+1}(P_i)=\mu_iL_t(P_i)$ for $i=1,2$. 
		It follows that $P_1,P_2\in \ssupp(c^{}_F)=\ssupp(c^{}_{F'})$, and hence $L'_0(P_i)=0$ for $i=1,2$.  On the other hand, $L'_0(P_i)=(c_t+c_{t+1}\mu_i)L_t(P_i)$ for $i=1,2$. 
But then $c_t+c_{t+1}\mu_1=0 = c_t+c_{t+1}\mu_2$,  which implies that $c_{t+1}=0$ (since $\mu_1\ne \mu_2$), and hence $c_t=0$. Thus we obtain a contradiction. This proves that $E=E'$. 
\end{proof}

\begin{lemma}
	\label{lem:PsiLem2}
Suppose $(E,L_t,L_{t+1},S) \in \mathcal{J}$ and $S'\in \mathcal{P}_s(\Fq)$ have the property that 
	$\Psi (E,L_t,L_{t+1},S) = \Psi(E,L_t,L_{t+1},S')$.
Then $S=S'$. 
\end{lemma}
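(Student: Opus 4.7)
My plan is to reformulate $\Psi(E,L_t,L_{t+1},S)$ as a ``ratio set''. For any $P\in E\setminus V(L_t)$, picking any representative $\mathbf{a}\in\Fq^{m+1}$ of $P$, both $L_t(\mathbf{a})$ and $L_{t+1}(\mathbf{a})$ scale linearly under rescaling, so the quantity $\gamma(P):=L_{t+1}(\mathbf{a})/L_t(\mathbf{a})\in\Fq$ is well defined. Clearly $P\in V(L_{t+1}-\mu L_t)$ if and only if $\gamma(P)=\mu$, so
$$
\Psi(E,L_t,L_{t+1},S)\e\{P\in E\setminus V(L_t):\gamma(P)\in\Fq\setminus S\},
$$
and the same description holds with $S'$ in place of $S$.

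Granted this, I would argue by contradiction. Suppose $S\ne S'$. Since $|S|=|S'|=s$, there is $\omega\in S'\setminus S$, after swapping the roles of $S$ and $S'$ if necessary. My goal is then to produce a point $P_\omega\in E\setminus V(L_t)$ with $\gamma(P_\omega)=\omega$: such a $P_\omega$ lies in $\Psi(E,L_t,L_{t+1},S)$ (because $\omega\notin S$), yet $P_\omega\notin\Psi(E,L_t,L_{t+1},S')$, because the \emph{unique} scalar $\mu\in\Fq$ with $P_\omega\in V(L_{t+1}-\mu L_t)$ is $\omega$, and $\omega\in S'$. This contradicts the hypothesis $\Psi(E,L_t,L_{t+1},S)=\Psi(E,L_t,L_{t+1},S')$.

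To build $P_\omega$ I would pick linearly independent $L_0,\dots,L_{t-1}\in R_1$ with $E=V(L_0,\dots,L_{t-1})$. The condition $E\not\subseteq V(L_t)$ in the definition of $\mathcal{J}$ shows that $L_t\notin \mathrm{span}(L_0,\dots,L_{t-1})$, while $E\cap V(L_t)\not\subseteq V(L_{t+1})$, combined with the standard fact that a linear form vanishing on the affine cone over the projective linear subspace $V(L_0,\dots,L_t)$ must lie in $\mathrm{span}(L_0,\dots,L_t)$, shows that $L_{t+1}\notin \mathrm{span}(L_0,\dots,L_t)$. Hence $L_0,\dots,L_{t+1}$ are linearly independent in $R_1$, and Lemma~\ref{lem:thetaTrick}, applied to these $t+2$ forms, yields $\mathbf{a}\in\Fq^{m+1}$ with $L_0(\mathbf{a})=\cdots=L_{t-1}(\mathbf{a})=0$, $L_t(\mathbf{a})=1$ and $L_{t+1}(\mathbf{a})=\omega$. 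The point $P_\omega\in\PP^m(\Fq)$ that $\mathbf{a}$ represents then lies in $E\setminus V(L_t)$ and has $\gamma(P_\omega)=\omega$, as required.

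The only mildly delicate step, and so the main obstacle, is translating the two set-theoretic conditions in the definition of $\mathcal{J}$ into the linear independence of $L_0,\dots,L_{t+1}$; everything else is a routine application of Lemma~\ref{lem:thetaTrick} and the well-definedness of the ratio $\gamma$.
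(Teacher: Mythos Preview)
Your proof is correct and follows essentially the same approach as the paper: argue by contradiction, use the linear independence of $L_0,\dots,L_{t+1}$ together with Lemma~\ref{lem:thetaTrick} to produce a point $P\in E\setminus V(L_t)$ with $L_{t+1}(P)=\omega L_t(P)$ for some $\omega$ in one of $S,S'$ but not the other. The paper is terser because it has already recorded, just before Lemma~\ref{lem:PsiLem1}, that membership in $\mathcal{J}$ guarantees the existence of $L_0,\dots,L_{t-1}$ with $L_0,\dots,L_{t+1}$ linearly independent; your explicit derivation of this from the two set-theoretic conditions is fine but unnecessary given that earlier remark. (Also, since $|S|=|S'|$ and $S\ne S'$, both $S\setminus S'$ and $S'\setminus S$ are nonempty, so no swapping is needed.)
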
 

\begin{proof}
Suppose, on the contrary, $S\ne S'$. Pick some $\omega\in S$ such that $\omega\not \in S'$. 	By   Lemma~\ref{lem:thetaTrick}, we
can find $P\in E$ such that  $L_t(P)\ne 0$ and $L_{t+1}(P)= \omega L_t(P)$. Then 
$P\in  \Psi(E,L_t,L_{t+1},S')$, while $P \not \in \Psi (E,L_t,L_{t+1},S) $, which is a contradiction. 
\end{proof}

\begin{lemma}
		\label{lem:PsiLem3}
		Let $(E,L_t,L_{t+1},S) \in \mathcal{J}$. Then there are exactly $q^{t+1} (q-1)$ elements in $\mathcal{J}$ of the
		form $(E,L_t,L'_{t+1},S')$ such that $\Psi (E,L_t,L_{t+1},S) = \Psi(E,L_t,L'_{t+1},S')$.
	\end{lemma}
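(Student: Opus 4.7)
The plan is to work in a convenient affine chart on $E$. I would first complete $L_t,L_{t+1}$ to a basis of $R_1$ adapted to $E$: choose $L_0,\dots,L_{t-1}\in R_1$ with $E=V(L_0,\dots,L_{t-1})$ and so that $L_0,\dots,L_{t+1}$ are linearly independent (both possible by hypothesis), then extend to a basis $L_0,\dots,L_m$ of $R_1$. Any candidate $L'_{t+1}\in R_1$ can be written uniquely as $L'_{t+1}=\sum_{i=0}^m c_iL_i$ with $c_i\in\Fq$. Setting $y_j:=L_{t+j}/L_t$ for $j=1,\dots,m-t$, two applications of Lemma~\ref{lem:thetaTrick} show that $(y_1,\dots,y_{m-t})$ identifies $E\setminus V(L_t)$ with $\A^{m-t}(\Fq)$, and under this identification the ratio $\phi':=L'_{t+1}/L_t$ becomes the affine-linear function $\phi'(y)=c_t+c_{t+1}y_1+c_{t+2}y_2+\cdots+c_my_{m-t}$, while $\Psi(E,L_t,L_{t+1},S)$ corresponds to $\{y:y_1\notin S\}$ and $\Psi(E,L_t,L'_{t+1},S')$ corresponds to $\{y:\phi'(y)\notin S'\}$.

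The heart of the argument would then be to force $\phi'$ to depend only on $y_1$. Suppose, for contradiction, that $c_{t+j}\ne 0$ for some $j\ge 2$. Pick any $y_1^{\ast}\in S$; holding $y_1=y_1^{\ast}$ and the remaining $y_i$ (for $i\ne 1,j$) fixed while letting $y_j$ range over $\Fq$, the value $\phi'(y)$ sweeps out all of $\Fq$. All these points lie in $\{y_1\in S\}$, which by the support-equality must coincide with $\{\phi'\in S'\}$, forcing $\Fq\subseteq S'$ and contradicting $|S'|=s<q$. Hence $c_{t+2}=\cdots=c_m=0$. The requirement $(E,L_t,L'_{t+1},S')\in\mathcal{J}$ then forces $c_{t+1}\ne 0$, because on $E\cap V(L_t)$ one has $L_0=\cdots=L_{t-1}=L_t=0$, so $L'_{t+1}|_{E\cap V(L_t)}=c_{t+1}L_{t+1}|_{E\cap V(L_t)}$, and the latter does not vanish identically by the standing hypothesis on $L_{t+1}$.

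With $\phi'=c_t+c_{t+1}y_1$ and $c_{t+1}\ne 0$, the identification of supports reduces to the elementary equivalence $y_1\in S\Leftrightarrow c_t+c_{t+1}y_1\in S'$, i.e.\ $S'=c_t+c_{t+1}S$, so $S'$ is uniquely determined by the pair $(c_t,c_{t+1})$. The coefficients $c_0,\dots,c_{t-1}$ are entirely free (they multiply polynomials vanishing on $E$, hence affect neither $\phi'$ nor $L'_{t+1}|_{E\cap V(L_t)}$), contributing $q^t$ choices; together with $q$ choices for $c_t$ and $q-1$ nonzero choices for $c_{t+1}$, and the forced $c_{t+2}=\cdots=c_m=0$, the uniquely determined $L'_{t+1}\in R_1$ together with the determined $S'$ give a total of $q\cdot(q-1)\cdot q^t=q^{t+1}(q-1)$ admissible tuples, as claimed.

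The main obstacle I anticipate is the clean execution of the ``$\phi'$ depends only on $y_1$'' step: one must invoke Lemma~\ref{lem:thetaTrick} to guarantee that varying $y_j$ really does make $\phi'$ sweep out all of $\Fq$, and this is precisely the point at which the hypothesis $s\le q-2$ (coming from $s<q-1$) plays its essential role; the rest is essentially bookkeeping.
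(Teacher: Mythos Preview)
Your proposal is correct and follows essentially the same approach as the paper. Both arguments first show (via Lemma~\ref{lem:thetaTrick}) that $L'_{t+1}$ must lie in the $\Fq$-span of $L_0,\dots,L_{t+1}$, then deduce $c_{t+1}\ne 0$ from the $\mathcal{J}$-membership condition, then verify that for each admissible $(c_0,\dots,c_{t+1})$ there is a unique compatible $S'$, and finally count $q^{t+1}(q-1)$; the only difference is cosmetic---you pass to affine coordinates $y_j=L_{t+j}/L_t$ on $E\setminus V(L_t)$ and argue with the function $\phi'$, whereas the paper works directly with the linear forms and constructs the contradicting point by a single application of Lemma~\ref{lem:thetaTrick} to $\{L_0,\dots,L_{t+1},L'_{t+1}\}$.
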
 

\begin{proof}
Suppose $L'_{t+1} \in R_1$ and $S'\in \mathcal{P}_s(\Fq)$ are such that 
$\Psi (E,L_t,L_{t+1},S) = \Psi(E,L_t,L'_{t+1},S')$. Choose $L_0, \dots , L_{t-1}  \in R_1$ such that 
$E= V(L_0, \dots , L_{t-1})$ and $\{L_0, \dots , L_{t+1}\}$ as well as $\{L_0, \dots , L_t, L'_{t+1}\}$ are 
 linearly independent. In case $\{L_0, \dots , L_{t+1}, L'_{t+1}\}$ is also linearly independent, then  
by Lemma~\ref{lem:thetaTrick}, we can find $P\in E$ such that  
$L_t(P)\ne 0$, $L_{t+1}(P)= \mu L_t(P)$, and $L'_{t+1}(P)= \omega L_t(P)$ 
 for some $\mu \in \Fq\setminus S$ and $\omega \in S$. But then 
$P\in  \Psi(E,L_t,L_{t+1},S') \setminus \Psi (E,L_t,L'_{t+1},S')$, which is a contradiction. Thus 
there are  $c_0, \dots , c_{t+1}\in \Fq$ such that
\begin{equation}\label{lincombForPsiLem3}
L'_{t+1} =c_0L_0+\ldots+c_{t-1}L_{t-1}+c_tL_t+c_{t+1}L_{t+1} .
\end{equation}
Note that $c_0, \dots , c_{t+1}$ are uniquely determined by $L'_{t+1}$ since $\{L_0, \dots , L_{t+1}\}$ is linearly independent. Moreover, $c_{t+1}\ne 0$ since $\{L_0, \dots , L_t, L'_{t+1}\}$ is also  
linearly independent. On the other hand, if $c_0, \dots , c_{t+1}\in \Fq$ are picked arbitrarily with $c_{t+1}\ne 0$, and 
if $L'_{t+1}$ is given by  \eqref{lincombForPsiLem3}, then we will show that there exists  unique $S'$ in $\mathcal{P}_s(\Fq)$ 
such that $\Psi (E,L_t,L_{t+1},S) = \Psi(E,L_t,L'_{t+1},S')$. To prove the existence, note that for any $\lambda\in \Fq$, 
$$
L'_{t+1} - \lambda L_t = c_tL_t+c_{t+1}L_{t+1}  - \lambda L_t = c_{t+1}\left( L_{t+1} - \frac{\lambda - c_t}{c_{t+1}}L_t\right) \quad \text{on points of } E.
$$ 
Thus if we take $S'\in \mathcal{P}_s(\Fq)$ such that $\Fq \setminus S'= \{\mu c_{t+1} + c_{t} : \mu \in \Fq \setminus S\}$, then 
$$
E \cap \Big( \bigcup_{\mu'\in {\Fq\setminus S'}} \! V(L'_{t+1}-\mu' L_t) \Big) = E \cap \Big( \bigcup_{\mu\in {\Fq\setminus S}} \! V(L_{t+1}-\mu L_t) \Big). 
$$
This implies that $\Psi (E,L_t,L_{t+1},S) = \Psi(E,L_t,L'_{t+1},S')$. The uniqueness of $S'$ (for the given choice of $L'_{t+1}$) is an immediate consequence of Lemma~\ref{lem:PsiLem2}. Since $L'_{t+1}$ of the form \eqref{lincombForPsiLem3} can be chosen in exactly $q^{t+1} (q-1)$ ways, the lemma is proved. 
\end{proof}

\begin{lemma}
	\label{lem:PsiLem4}
	Let $(E,L_t,L_{t+1},S) \in \mathcal{J}$. Then there are exactly $(s+1)(q-1)^2q^{2t+1} $ elements $(E',L'_t,L'_{t+1},S')$
		in $\mathcal{J}$ such that $\Psi (E,L_t,L_{t+1},S) = \Psi(E',L'_t,L'_{t+1},S')$.
\end{lemma}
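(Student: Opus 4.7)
The plan is to combine Lemmas~\ref{lem:PsiLem1} and \ref{lem:PsiLem3} with a geometric analysis of the pencil of hyperplanes hidden in the support. By Lemma~\ref{lem:PsiLem1}, any $(E',L'_t,L'_{t+1},S')\in\mathcal{J}$ with the same image under $\Psi$ must satisfy $E'=E$, so only triples $(L'_t,L'_{t+1},S')$ need to be counted.

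The key observation is that $V(L_t)\cap E$ together with $V(L_{t+1}-\mu L_t)\cap E$ for $\mu\in\Fq$ are $q+1$ distinct hyperplanes of $E$, distinct because $L_t|_E$ and $L_{t+1}|_E$ are linearly independent (this is built into $\mathcal{J}$ via $E\cap V(L_t)\not\subseteq V(L_{t+1})$), and together they form the pencil of all hyperplanes of $E$ through the codimension-two subspace $V(L_t,L_{t+1})\cap E$. The support $\Psi(E,L_t,L_{t+1},S)$ is exactly the union of the $q-s$ such hyperplanes with $\mu\in\Fq\setminus S$. Since $s\le q-2$, we have $q-s\ge 2$, and any two distinct hyperplanes of $E$ determine a unique pencil (the one through their intersection); hence the support alone recovers both the pencil and the $s+1$ ``missing'' hyperplanes $V(L_t)\cap E$ and $V(L_{t+1}-\omega_j L_t)\cap E$ for $\omega_j\in S$. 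For a primed quadruple to yield the same support, the hyperplane $V(L'_t)\cap E$ must therefore be one of these $s+1$ missing hyperplanes, giving the first factor of $s+1$.

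Fix such a missing hyperplane $H$. Writing $E=V(L_0,\dots,L_{t-1})$ and $W=\mathrm{span}_{\Fq}(L_0,\dots,L_{t-1})\subset R_1$, the restriction-to-$E$ map identifies hyperplanes of $E$ with nonzero elements of the $(m+1-t)$-dimensional quotient $R_1/W$ modulo scaling; hence the number of $L'_t\in R_1$ with $V(L'_t)\cap E=H$ equals $(q-1)q^t$. For each such $L'_t$, choosing any $L'_{t+1}\in R_1$ whose restriction to $E$ lies in the pencil but is not proportional to $L'_t|_E$, together with the unique $S'\in\mathcal{P}_s(\Fq)$ produced by the bijection $\mu\mapsto V(L'_{t+1}-\mu L'_t)\cap E$, yields a valid quadruple in $\mathcal{J}$ realizing the given support; Lemma~\ref{lem:PsiLem3} then contributes exactly $q^{t+1}(q-1)$ pairs $(L'_{t+1},S')$ for this $L'_t$. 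Multiplying the three factors gives $(s+1)\cdot(q-1)q^t\cdot q^{t+1}(q-1)=(s+1)(q-1)^2q^{2t+1}$. The main subtlety is the pencil picture itself: verifying that the support determines the pencil (which uses $q-s\ge 2$) and that each of the $s+1$ missing hyperplanes is genuinely realizable as $V(L'_t)\cap E$ for some extension to a quadruple in $\mathcal{J}$; once that is in hand, the counting is automatic.
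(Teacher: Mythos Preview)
Your argument is correct and is a geometric repackaging of the paper's proof. The paper works algebraically: after establishing $E'=E$ it shows that $L'_t\in\mathrm{span}(L_0,\dots,L_{t+1})$ and splits into Case~1 (the $L_{t+1}$-coefficient vanishes, so $V(L'_t)\cap E=V(L_t)\cap E$) and Case~2 (the $L_{t+1}$-coefficient is nonzero, forcing $V(L'_t)\cap E=V(L_{t+1}-\omega L_t)\cap E$ with $\omega=-b_t/b_{t+1}\in S$). These two cases are precisely your $s+1$ ``missing'' hyperplanes, and the paper's separate counts $q^{2t+1}(q-1)^2$ and $s\,q^{2t+1}(q-1)^2$ sum to your single product. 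Your pencil picture makes the factor $s+1$ immediately visible, whereas the paper only recovers it after adding the two cases; conversely, the paper's explicit construction of $S^*$ in each case supplies the realizability step that you flag as a subtlety.

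One imprecision worth fixing: the support $\sigma$ is not literally the union of the $q-s$ hyperplanes $V(L_{t+1}-\mu L_t)\cap E$, but that union with the base locus $B=E\cap V(L_t,L_{t+1})$ removed. The exact statement is that the \emph{complement} $E\setminus\sigma$ equals the union of the $s+1$ missing hyperplanes. Since the whole setup (and Lemma~\ref{lem:PsiLem3} itself, whose proof uses an element of $S$) is only invoked when $s\ge 1$, any two of those $s+1$ hyperplanes already pin down the pencil, which is what your argument needs; your appeal to $q-s\ge 2$ on the support side is not quite the right hypothesis here.
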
 

\begin{proof}
	Write $\sigma := \Psi (E,L_t,L_{t+1},S)$.   Choose $L_0, \dots , L_{t-1}  \in R_1$ such that 
	$E= V(L_0, \dots , L_{t-1})$ and $\{L_0, \dots , L_{t+1}\}$  is linearly independent. 
	Now if  $(E',L'_t,L'_{t+1},S')$ is in $\Psi^{-1}(\sigma)$, i.e., it is an element of $\mathcal{J}$ such that $\Psi(E',L'_t,L'_{t+1},S') = \sigma$, then by  Lemma~\ref{lem:PsiLem1}, $E'=E$, and therefore, $\{L_0, \dots , L_{t-1}, L'_t, L'_{t+1}\}$ is also 
	linearly independent. Further, 
	as before, 
	we can use   Lemma~\ref{lem:thetaTrick} to show that $\{L_0, \dots , L_{t+1}, L'_{t}\}$ is linearly dependent. Thus exactly one of the following two cases are possible. 

\medskip

\noindent
{\bf Case 1.} 
$L'_t:= b_0L_0+\dots+b_tL_t$ for some $b_0, \dots , b_{t}\in \Fq$. 

\medskip

In this case, we must have $b_t \ne 0$. Further, given any $b_0, \dots , b_{t}\in \Fq$ with $b_t \ne 0$, if we take 
$L'_t:= b_0L_0+\dots+b_tL_t$, then we can show that there is a unique $S^* \in \mathcal{P}_s(\Fq)$ for which 
$\Psi (E,L_t,L_{t+1},S) = \Psi(E',L'_t,L_{t+1},S^*)$. Indeed, the uniqueness of $S^*$ is clear from Lemma~\ref{lem:PsiLem2}, 
whereas for the existence, it suffices to observe that 
 for any $\lambda\in \Fq$, 
$$
L_{t+1} - \lambda L'_t = L_{t+1}  - \lambda b_t L_t  \quad \text{on points of } E, 
$$ 
and so we can take $S^*= \{ \mu/b_t :  \mu \in \Fq \setminus S\}$. 
It follows that there are exactly $q^t (q-1)$ choices of $b_0, \dots , b_{t}\in \Fq$ with $b_t \ne 0$ and a corresponding unique choice of $S^*$ for which $L'_t:= b_0L_0+\ldots+b_tL_t$ satisfies $\Psi (E,L_t,L_{t+1},S) = \Psi(E',L'_t,L_{t+1},S^*)$. Furthermore, for any such choice of $L'_t$ and $S^*\in \mathcal{P}_s(\Fq)$, by Lemma~\ref{lem:PsiLem3}, there are exactly $q^{t+1} (q-1)$ choices of $L'_{t+1}$ and 
corresponding unique choice of $S' \in \mathcal{P}_s(\Fq)$ for which $\Psi(E',L'_t,L_{t+1},S^*) = \Psi(E',L'_t,L'_{t+1},S')$. It follows that there are exactly $q^t (q-1) \times q^{t+1} (q-1) = q^{2t+1} (q-1)^2$ elements $(E',L'_t,L'_{t+1},S')$ in $\Psi^{-1}(\sigma)$ for which 
$L'_t:= b_0L_0+\ldots+b_tL_t$ for some $b_0, \dots , b_{t}\in \Fq$ with $b_t \ne 0$.

\medskip

\noindent
{\bf Case 2.} $L'_{t} =b_0L_0+\dots + b_{t+1}L_{t+1}$ for some $b_0, \dots , b_{t+1}\in \Fq$ with $b_{t+1} \ne 0$. 

\medskip

In this case, we claim that $-b_t/b_{t+1} \in S$. To prove this, assume the contrary. Use Lemma~\ref{lem:thetaTrick} to choose $P\in E$ such that $L_t(P) \ne 0$ and $L_{t+1}(P) = \left(-b_t/b_{t+1} \right) L_t(P)$. Then $P\in \sigma$, since 
$-b_t/b_{t+1} \not\in S$. On the other hand, our choice of $P$ implies that 
$$
L'_t(P) = b_t L_{t}(P) + b_{t+1}L_{t+1}(P) = 0
$$
and therefore $P\not\in \Psi(E',L'_t,L'_{t+1},S') = \sigma$, which is a contradiction.
Thus the claim is proved. We shall now show that for any fixed choice of $b_0, \dots , b_{t+1}\in \Fq$ with $b_{t+1} \ne 0$ and $-b_t/b_{t+1} \in S$, if we take $L'_{t} =b_0L_0+\dots + b_{t+1}L_{t+1}$,  and  $L^*_{t+1}:= L_t $, 
then  there is a unique $S^* \in \mathcal{P}_s(\Fq)$ such that $ (E,L'_t,L^*_{t+1},S^*) \in \mathcal{J}$ and 
$\Psi(E,L'_t,L^*_{t+1},S^*) = \sigma$. To see this, first note that $\{L_0, \dots ,L_{t-1}, L'_t,  L^*_{t+1}\}$  is linearly independent. 
Further, if we take 
$S^* \in \mathcal{P}_s(\Fq)$ such that
$$
\Fq\setminus S^* = \left\{ \frac{1}{b_t + \mu b_{t+1}} : \mu \in \Fq \setminus S\right\}.
$$
then $S^*$  is well-defined since $-b_t/b_{t+1} \in S$. Moreover, if for any $\mu^* \in \Fq\setminus S^*$, 
we write $\mu^* = 1/(b_t + \mu b_{t+1})$ for some $\mu \in  \Fq \setminus S$, then on points of $E$, we have 
$$
L^*_{t+1} - \mu^* L'_t = L_t - \frac{1}{b_t + \mu b_{t+1}} (b_tL_t +b_{t+1}L_{t+1})
= \frac{-b_{t+1}}{b_t + \mu b_{t+1}} \left( L_{t+1} - \mu L_t\right).
$$
Consequently, 
$$
(E\setminus V(L'_t))\cap \Big(\! \bigcup_{\mu^*\in {\Fq\setminus S^*}} \! V(L^*_{t+1}-\mu^* L'_t) \Big) = 
(E\setminus V(L_t))\cap \Big(\! \bigcup_{\mu\in {\Fq\setminus S}} \! V(L_{t+1}-\mu L_t) \Big) = \sigma. 
$$
This shows the existence of a polynomial $L^*_{t+1} \in R_1$ and a set $S^* \in \mathcal{P}_s(\Fq)$ such that $ (E,L'_t,L^*_{t+1},S^*) \in \mathcal{J}$ and 
$\Psi(E,L'_t,L^*_{t+1},S^*) = \sigma$. The uniqueness of $S^*$ for the given choice of $L^*_{t+1}$ is clear from 
  Lemma~\ref{lem:PsiLem2}. Moreover, by Lemma \ref{lem:PsiLem3}, there are exactly $q^{t+1} (q-1)$ choices of $L'_{t+1}$ and 
  corresponding unique choice of $S' \in \mathcal{P}_s(\Fq)$ for which $\Psi(E',L'_t,L^*_{t+1},S^*) = \Psi(E',L'_t,L'_{t+1},S')$.
  Since 
  $b_0, \dots , b_{t+1}\in \Fq$ such that $b_{t+1} \ne 0$ and $-b_t/b_{t+1} \in S$ can be chosen in $s(q-1) q^t$ ways, we see that there are exactly $q^{t+1} (q-1) \times s(q-1) q^t 
  = sq^{2t+1} (q-1)^2$ elements $(E',L'_t,L'_{t+1},S')$ in $\Psi^{-1}(\sigma)$ for which  $L'_t$ is as in Case 2.
 
 By combining the counts in Case 1 and Case 2, we obtain the desired result. 
\end{proof}

We are now ready to prove the main result of this section. 

\medskip

\noindent{\emph{Proof of Theorem~\ref{thm:enum}.}  If $s=0$, then the desired formula \eqref{NoMinWtPRM} follows from 
	Lemma~\ref{lem:EnumZero-s}. If $s\ne 0$, then by Lemma~\ref{lem:PsiLem4}, 
\begin{equation}\label{eq:FibreCard}
	|\Psi^{-1}(\sigma)|=(s+1)(q-1)^2 q^{2t+1} \quad \text{for all }\sigma\in \Phi(\mathcal{M}).
\end{equation}
It follows that 
\begin{align*}
|\Phi(\mathcal{M})| & = \frac{|	\mathcal{J}|}{ (s+1)(q-1)^2 q^{2t+1}} \\
&= \frac{1}{(s+1)(q-1)^2 q^{2t+1}} {{m+1}\brack{m-t+1}}_q \left( q^{m+1} - q^t \right) \left( q^{m+1} - q^{t+1} \right){ q \choose s} \\
&= \frac{1}{s+1}  {{m+1}\brack{t}}_q  {{m-t+1}\brack{1}}_q  {{m-t}\brack{1}}_q{ q \choose s} \\
&=  \frac{\left(q^{m+1} - 1\right)}{(s+1)(q-1)} {{m}\brack{t}}_q {\stirling{m-t}{1}}_q   {q \choose s}.
\end{align*}
By combining this with \eqref{eq:MphiM}, we obtain the desired formula. \hfill $\Box$
	
	\begin{rmk}\label{rmk:Nathan}
		{\rm
		The formula \eqref{NoMinWtPRM} for the number of minimum weight codewords of the projective Reed-Muller code $\prm_q(d,m)$ is evidently a positive integer when $s=0$. In case $s\ne 0$, we can easily rewrite \eqref{NoMinWtPRM}~as 
		$$
		\frac{\left(q^{m+1} - 1\right)(q^m - 1)}{(q+1)(q-1)}  {{m-1}\brack{t}}_q  {{q+1} \choose {s+1}},
		$$
		and this readily shows  that it is a positive integer. In the special case when  $t=m-1$ and $s= q-k$, where $2\le k \le q-1$, the formula reduces to
		$$
		\frac{(q-1)}{(q+1)} p^{ }_m p^{ }_{m-1} {{q+1} \choose {k}}.
		$$
		It may be interesting to note that this is precisely equal to $(q-1)$ times the  number of collections of $k$ collinear points in $\PP^m(\Fq)$. For geometric interpretations such as this, and more, we refer to  Kaplan and Matei \cite{KM}, especially \S 3.3 of their paper. 
	}
	\end{rmk}

\section{Dimension Formulas for Projective Reed-Muller Codes}

Fix throughout this section $m, d \in \Z$ such that $m\ge 1$ and $1\le d  \le m(q-1)+1$.
As before, let $\, t , s$ be unique integers such that 
$d-1=t(q-1)+s$ with $t\ge 0$ and $0\le s< q-1$.
An explicit formula for the dimension of a PRM code of an arbitrary order $d$ was first given by S\o rensen \cite{S}. 
It is as follows.
\begin{eqnarray}\label{eq:SdimPRM}
\displaystyle{\alpha^{ }_q(d,m) := \sum_{\substack{e=1 \\ e \equiv d\Mod{q-1} }}^d \left( \sum_{j=0}^{m+1} (-1)^j \binom{m+1}{j}\binom{e-jq+m}{e-jq} \right).}
\end{eqnarray}

Later Mercier-Rolland \cite{MR} gave another formula for the dimension of PRM codes of order $d$ using the dimension of vanishing ideal of $\PP^m(\Fq)$. This one is as follows. 
\begin{eqnarray}\label{eq:MRdimPRM}
\beta_q(d,m) :=  \binom{m+d}{d} - 
\sum_{j=2}^{m+1} (-1)^j {{m+1}\choose{j}} \sum_{i=0}^{j-2} {{d+(i+1)(q-1) - jq + m}\choose{d+(i+1)(q-1) - jq }}.
\end{eqnarray}
Around the same time\footnote{
A comment on the dates of publication of \cite{MR} and \cite{RT} can be found in  \cite[Remark~3.4]{G2}.},  Renter\'{i}a and Tapia-Recillas \cite[Prop. 12]{RT} 
gave the following nice formula for the dimension of $\prm_q(d,m)$:
\begin{equation}\label{eq:RTdimPRM}
\gamma^{ }_q(d,m) :=  \sum_{i=0}^m\sum_{j=0}^{i}(-1)^j \binom{i}{j}\binom{i+d-1 -jq}{i}.
\end{equation}
More recently, Can, Joshua and Ravindra \cite[Prop. 3.5]{CJR} have proposed another formula for the dimension of PRM codes of order $d$, which is as follows. 
\begin{eqnarray}\label{eq:CdimPRM}
\displaystyle{\delta^{ }_q(d,m) := \sum_{\substack{e=1 \\ e \equiv d\Mod{q-1} }}^d \left( \sum_{j=0}^{\lfloor 	e/q \rfloor } (-1)^j \binom{m+1}{j}\binom{e-jq+m}{m} \right).}
\end{eqnarray}
Can, Joshua and Ravindra \cite{CJR} further alleged that the formula of 
S\o rensen \cite{S} is wrong and wrote that the expression on the right in \eqref{eq:SdimPRM} is equal to $0$. This is, in fact, not correct. To clear the confusion, let us first note that the binomial coefficients are (and always should be!) defined as in the beginning of Section~\ref{sec2}. In particular, ${a\choose b}$ is zero when  $b<0$.  In fact, if $a, b\in \Z$, then
\begin{equation}
\label{binom1}
{a\choose b}=0 \Longleftrightarrow \text{either } b < 0 \text{ or } b> a\ge 0.
\end{equation}
Furthermore, the identity ${a\choose b}={a\choose {a-b}}$ is not absolute, but conditional. In the first place it requires that $a\in \Z$ and in fact, for any $a,b\in \Z$
\begin{equation}
\label{binom2}
{a\choose b}={a\choose {a-b}} \Longleftrightarrow \text{either } a \ge 0 \text{ or } a < b < 0.
\end{equation}
We remark in passing that on the other hand, the Pascal triangle identity, viz., 
\begin{equation}
\label{eq:PascalTriangle}
{a\choose {b-1}} + {a\choose {b}} = {{a+1}\choose {b}}
\end{equation}
is an absolute identity in the sense that it is valid for all $b\in \Z$ and all $a$ in any integral domain containing $\Z$ as a subring. At any rate, in view of \eqref{binom1}, the summation over $j$ in \eqref{eq:SdimPRM} 
is, in effect, over $0\le j \le \min\{m+1, \lfloor 	e/q \rfloor\} = \lfloor 	e/q \rfloor$ for 
any positive integer $e\le d$, where the last equality follows since 
$e\le d \le m(q-1)+1 \le mq$.  
 Moreover,  for $0\le j \le \lfloor 	e/q \rfloor$,  by \eqref{binom2}, we obtain  
 $\binom{e-jq+m}{e-jq}  = \binom{e-jq+m}{m}$. 
 Thus it is clear that $\alpha^{ }_q(d,m) =\delta^{ }_q(d,m)$. In other words, the formulas of S\o rensen and of Can, Joshua and Ravindra are, in fact, the same! The argument in Section~3 of \cite{CJR} to conclude that $\alpha^{ }_q(d,m) =0$ uses their Lemma 3.3, which is perfectly fine, but then it uses the ``identity"  $\binom{e-jq+m}{e-jq}  = \binom{e-jq+m}{m}$ for arbitrary nonnegative integer $j$, which as we have noted in \eqref{binom2}, is not always true. [There is also a typo in the definition given in \cite[p. 6]{CJR} of the polynomial $P(x)$; namely,  $t$ should be replaced throughout by $t+m$ in the expression on the right, but this is minor]. 
 
The equivalence of  \eqref{eq:SdimPRM} and \eqref{eq:MRdimPRM} is rather nontrivial. As mentioned in \cite{MR}, a direct proof of this was given by Michel Quercia, but to the best of our knowledge, this has not been published. The argument of Michel Quercia was made available to us by Robert Rolland, and we have reproduced it in the appendix to this paper. 
We give below an alternative argument to prove the equivalence of these formulas.  This alternative argument uses the following simple combinatorial lemma,  which has been given, for instance,  by S\o rensen \cite{S}. 

\begin{lemma}
\label{lem:SorComb}
Let $n,a,b$ be any nonnegative integers and let $N(a,n,b)$ denote the number of ways in which one can place $a$ objects in $n$ blocks such that no block contains more than $b$ objects.  Then
\begin{equation}\label{eq:Nanb}
N(a,n,b)=\sum_{j=0}^{n} (-1)^j {{n}\choose{j}} {{a-j(b+1) + n-1}\choose{a-j(b+1)}}.
\end{equation}
\end{lemma}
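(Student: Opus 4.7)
The plan is to prove the identity by a standard inclusion–exclusion argument on the ``bad'' blocks, i.e., the ones containing more than $b$ objects, combined with the stars-and-bars count.

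First, I would recall that without any upper bound, the number of ways to place $a$ indistinguishable objects into $n$ distinguishable blocks is $\binom{a+n-1}{a}$, the number of weak compositions of $a$ into $n$ parts. For each $i\in\{1,\dots,n\}$, let $A_i$ denote the set of such placements in which block $i$ receives at least $b+1$ objects. Then $N(a,n,b)$ equals the number of placements lying outside $A_1\cup\cdots\cup A_n$.

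Next, for a subset $J\subseteq\{1,\dots,n\}$ with $|J|=j$, I would evaluate $\bigl|\bigcap_{i\in J} A_i\bigr|$ by the usual ``subtract $b+1$ from each forced block'' bijection: a placement in this intersection corresponds uniquely to an unconstrained placement of $a-j(b+1)$ objects into $n$ blocks. Hence
\[
\Bigl|\bigcap_{i\in J} A_i\Bigr| \;=\; \binom{a-j(b+1)+n-1}{a-j(b+1)},
\]
with the understanding (from the binomial convention fixed at the start of Section~\ref{sec2}) that this is $0$ when $a-j(b+1)<0$. Since this count depends only on $j=|J|$, inclusion–exclusion yields
\[
N(a,n,b) \;=\; \sum_{j=0}^{n} (-1)^j \binom{n}{j}\binom{a-j(b+1)+n-1}{a-j(b+1)},
\]
which is precisely \eqref{eq:Nanb}.

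There is essentially no obstacle here; the only thing to watch is that the sum can be written up to $j=n$ (as in the statement) rather than cutting it off at $j=\lfloor a/(b+1)\rfloor$, which is justified automatically by the convention \eqref{binom1} that makes the extraneous terms vanish. Degenerate cases ($a=0$, $n=0$, or $b=0$) can be checked directly and pose no difficulty.
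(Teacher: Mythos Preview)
Your proposal is correct: the inclusion--exclusion argument on the ``bad'' blocks together with stars-and-bars is precisely the standard proof, and your handling of the upper summation limit via the convention \eqref{binom1} is exactly right. The paper itself does not spell out a proof but simply refers to \cite[Lem.~5]{S} and \cite[Lem.~1]{GR}, both of which use essentially the same argument you have given.
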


\begin{proof}
See \cite[Lem. 5]{S}.  
Or use an argument as in the proof of \cite[Lem. 1]{GR}.
\end{proof}

\begin{rmk}\label{rmk:Nrho}
	{\rm 
		Given any nonnegative integers $n, a,b$, it is clear from the description of $N(a,n,b)$ that it is the number of monomials in $n$ variables of degree $a$ such that the degree in each variable is at most $b$. In particular, if $0 \le \nu \le n(q-1)$, then  $N(\nu, n, q-1)$ is  the number of reduced monomials of degree $\nu$ in $n$ variables, and so we see that 
	$N(\nu, n, q-1) = \rho_q(\nu , n) - \rho_q(\nu -1, n)$, where  $\rho_q(\nu, n)$ is given by  \eqref{eq:dimRM}  if $n\ge 1$ and 
	where by convention, $\rho_q(0,0):=1$ and $\rho_q(k,n):=0$ if $k\in \Z$ with $k<0$.
This can also be seen directly by comparing  \eqref{eq:dimRM} with the formula in Lemma~\ref{lem:SorComb}               and using the Pascal triangle identity \eqref{eq:PascalTriangle}. 
Note also that the degree of a reduced monomial in $n$ variables is always $\le n(q-1)$ and~so  
\begin{equation}\label{eq:rho-with-min}
\rho_q(\nu , n) = \rho_q\left(\min\{\nu, \, n(q-1)\}, \; n\right) \quad \text{for any nonnegative integers } \nu, \, n.
\end{equation}
This is clear from the description of $\rho_q(\nu , n)$ as the number of reduced monomials in $n$ variables of degree $\le \nu$, and also from the formula \eqref{eq:dimRM} because if $\nu > n(q-1)$, then using \eqref{eq:dimRM}, \eqref{eq:Nanb},  and successive applications of 
\eqref{eq:PascalTriangle}, we see that 
$$
\rho_q(\nu , n) = \rho_q(n(q-1) , \, n) + \sum_{a= n(q-1) +1}^{\nu} N(a, \,  n, \, q-1) \; 
= \; \rho_q(n(q-1),\,  n),
$$
where the last equality follows since any distribution of $n(q-1)+1$ or more objects in $n$ blocks will result in at least one block containing $q$ or more objects. 
}
\end{rmk}
We are now ready to prove that the two dimension formulas $\alpha^{ }_q(d,m)$ and $\beta_q(d,m)$ given by \eqref{eq:SdimPRM}  and \eqref{eq:MRdimPRM} are equal. 
\begin{prop}
\label{pro:equality}
$\alpha^{ }_q(d,m) = \beta_q(d,m)$.
\end{prop}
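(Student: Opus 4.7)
My plan is to put both $\alpha_q(d,m)$ and $\beta_q(d,m)$ into a common normal form by re-indexing their double sums, and then to show that their difference reduces to an alternating sum which vanishes via generating functions.

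The first step is to use Lemma~\ref{lem:SorComb} to recognize that the inner bracket in \eqref{eq:SdimPRM} equals $N(e,m+1,q-1)$, so $\alpha_q(d,m)$ is a sum of such values as $e$ runs over $\{s+1,s+q,\dots,d\}$, i.e.\ $e=d-k(q-1)$ for $k=0,1,\dots,t$. The algebraic identity $d-k(q-1)-jq=d-(k+j)(q-1)-j$ suggests the definition $T(r,j):=\binom{d-r(q-1)-j+m}{d-r(q-1)-j}$ and the reindexing $r=k+j$ inside $\alpha$. In parallel, the substitution $k=j-i-1$ in the inner sum of $\beta_q(d,m)$ gives the identity $d+(i+1)(q-1)-jq=d-k(q-1)-j$, so that inner sum rewrites as $\sum_{k=1}^{j-1}T(k,j)$. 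After peeling off the common leading term $\binom{m+d}{d}$, the intervals $\{1,\dots,j-1\}$ and $\{j,\dots,t+j\}$ join seamlessly for $j\ge 2$, and a clean bookkeeping of the $j=0,1$ contributions yields
\begin{equation*}
\alpha_q(d,m)-\beta_q(d,m)\;=\;\sum_{j=0}^{m+1}(-1)^{j}\binom{m+1}{j}U_j,\qquad\text{where}\qquad U_j\;:=\;\sum_{r=1}^{t+j}T(r,j).
\end{equation*}

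The remaining task is to show this alternating sum vanishes. Writing $T(r,j)=[x^{d-r(q-1)-j}](1-x)^{-(m+1)}$ and summing the finite geometric series in $r$ (using $d-(t+1)(q-1)=s+2-q$ for the upper endpoint), one obtains $U_j=[x^{d-j-(q-1)}]g(x)-[x^{(s+2-q)-jq}]g(x)$ with $g(x):=(1-x)^{-(m+1)}/(1-x^{q-1})$. The alternating binomial transform is then applied to each piece separately, using the identities $\sum_{j}(-1)^{j}\binom{m+1}{j}x^{j}=(1-x)^{m+1}$ and $\sum_{j}(-1)^{j}\binom{m+1}{j}x^{jq}=(1-x^{q})^{m+1}$. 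Since $(1-x)^{m+1}g(x)=(1-x^{q-1})^{-1}$, the first piece contributes $[x^{d-(q-1)}](1-x^{q-1})^{-1}$, which equals $1$ if $s=q-2$ (since then $d=(t+1)(q-1)$) and $0$ otherwise. Since the power series $(1-x^{q})^{m+1}g(x)$ has nonnegative exponents and constant term $1$, the second piece contributes $[x^{s+2-q}](1-x^{q})^{m+1}g(x)$, which likewise equals $1$ if $s=q-2$ and $0$ otherwise (as $s+2-q<0$ when $s<q-2$). The two contributions cancel in every case, giving $\alpha_q(d,m)=\beta_q(d,m)$.

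The main obstacle will be the bookkeeping: making sure the two reindexings are consistent so the inner ranges $\{1,\dots,j-1\}$ and $\{j,\dots,t+j\}$ combine into the single range $\{1,\dots,t+j\}$, and carefully tracking the convention $\binom{a}{b}=0$ for $b<0$ throughout the manipulations. Once the difference is reduced to the clean form above, the generating-function finale is essentially automatic, with the cancellation depending only on whether $s=q-2$.
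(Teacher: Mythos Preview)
Your proposal is correct. The reindexing works exactly as you anticipate: with $T(r,j):=\binom{d-r(q-1)-j+m}{d-r(q-1)-j}$, the inner sum in $\alpha_q(d,m)$ becomes $\sum_{r=j}^{t+j}T(r,j)$ after substituting $r=k+j$, the inner sum in $\beta_q(d,m)$ becomes $\sum_{r=1}^{j-1}T(r,j)$ after substituting $k=j-i-1$, the $\binom{m+d}{d}$ terms cancel, and for every $j$ (including $j=0,1$) the two ranges join to give $U_j=\sum_{r=1}^{t+j}T(r,j)$. Your generating-function evaluation of $\sum_j(-1)^j\binom{m+1}{j}U_j$ is also sound; in particular, the second piece really does collapse to the constant term (or a negative-index coefficient) of the power series $(1+x+\cdots+x^{q-1})^{m+1}/(1-x^{q-1})$, so the $s=q-2$ dichotomy comes out cleanly.

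Compared with the paper: both arguments arrive at the same intermediate identity, namely that $\alpha_q(d,m)=\beta_q(d,m)$ is equivalent to $\sum_{j=0}^{m+1}(-1)^j\binom{m+1}{j}U_j=0$ (this is precisely the paper's equation \eqref{eq:equivdimPRM} after the same change of index). The paper then finishes differently: it extends the inner summation to $e\to-\infty$, calls the result $c$, and recognizes $c$ as $\sum_{k\ge 0}N(d-(k+1)(q-1),\,m+1,\,0)$, which is trivially $1$ or $0$ according as $(q-1)\mid d$ or not. Your finish via coefficient extraction in $g(x)=(1-x)^{-(m+1)}/(1-x^{q-1})$ is an analytic alternative that avoids the combinatorial interpretation; it is closer in spirit to Quercia's operator-theoretic proof reproduced in the appendix. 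Both routes are short once the common reduction is in hand; yours has the advantage of making the dependence on $s$ versus $q-2$ explicit, while the paper's has the advantage of needing nothing beyond Lemma~\ref{lem:SorComb}.
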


\begin{proof}
First note that we may rewrite \eqref{eq:MRdimPRM} as 
\begin{eqnarray*}
\beta_q(d,m)= \binom{m+d}{d} - 
\sum_{j=0}^{m+1} (-1)^j {{m+1}\choose{j}} 
\sum_{\substack{e=d+(q-1) \\ e \equiv d\Mod{q-1} }}^{d+(j-1)(q-1)} {{e - jq + m}\choose{e- jq }}.
\end{eqnarray*}
Also note that by separating terms in \eqref{eq:SdimPRM} corresponding to $e=d$ and $j=0$ as well as 
$e=d$ and $j\ge 1$,  we can write 
\begin{eqnarray*}
\alpha^{ }_q(d,m) & =&
 \sum_{\substack{1\le  e < d \\ e \equiv d\Mod{q-1} }}
 \sum_{j=0}^{m+1} (-1)^j {{m+1}\choose{j}}  {{e - jq + m}\choose{e- jq }} \\
 & & \qquad  + \binom{m+d}{d}
 + \;  \sum_{j=1}^{m+1} (-1)^j {{m+1}\choose{j}}  {{d - jq + m}\choose{d- jq }}
\end{eqnarray*}
Thus $\alpha^{ }_q(d,m)=\beta_q(d,m)$ is equivalent to the equality 
\begin{eqnarray}\label{eq:equivdimPRM}
\sum_{j=0}^{m+1} (-1)^j {{m+1}\choose{j}} \sum_{\substack{e=1 \\ e \equiv d\Mod{q-1} }}^{d+(j-1)(q-1)} {{e - jq + m}\choose{e- jq }}=0, 
\end{eqnarray}
Define
\begin{eqnarray*}
c:=\sum_{j=0}^{m+1} (-1)^j {{m+1}\choose{j}} \sum_{\substack{e \equiv d\Mod{q-1} \\ -\infty<e\le d+(j-1)(q-1)}}{{e - jq + m}\choose{e- jq }}.
\end{eqnarray*}
Then the expression on the left in  \eqref{eq:equivdimPRM} is equal to $c-1$ if $d$ is a multiple of $q-1$, and is equal to $c$ otherwise. Thus it is enough to show that $c=1$ if $d$ is a multiple of $q-1$ and $c=0$ if $d$ is not a multiple of $q-1$. 
Now
\begin{eqnarray*}
c & = &  \sum_{j=0}^{m+1} \sum_{k=0}^\infty (-1)^j {{m+1}\choose{j}} {{d+(j-1)(q-1)-k(q-1) - jq + m}\choose{d+(j-1)(q-1)-k(q-1)-jq }}\\
& = & \sum_{k=0}^\infty \sum_{j=0}^{m+1} (-1)^j {{m+1}\choose{j}} {{d+(j-1)(q-1)-k(q-1) - jq + m}\choose{d+(j-1)(q-1)-k(q-1)-jq }}\\
& = & \sum_{k=0}^\infty \sum_{j=0}^{m+1} (-1)^j {{m+1}\choose{j}} {{d-(k+1)(q-1)-j + m}\choose{d-(k+1)(q-1)-j }}\\
& = & \sum_{k=0}^\infty N(d-(k+1)(q-1),\; m+1,\; 0)
\end{eqnarray*}
where the last equality uses Lemma~\ref{lem:SorComb} and the notation therein.  But evidently, 
\begin{eqnarray*}
N(d-(k+1)(q-1),\; m+1,\; 0)=\begin{cases}
1 & \text{if } d=(k+1)(q-1), \\
0 & \text{if } d\neq (k+1)(q-1).
\end{cases}
\end{eqnarray*}
This implies that $c=1$ if $d$ is a multiple of $q-1$,  and $0$ otherwise. 
\end{proof}

We will now consider the formula $\gamma_q(d,m)$ of Renter\'{i}a and Tapia-Recillas  \cite[Prop.~12]{RT}.  They proved it 
 using an ideal-theoretic approach and the notion of Hilbert functions.
 (See also \cite[Cor.~3.9 (b)]{CNL}, where Carvalho, Neumann, and Lopez derive the same formula  from a more general result on projective nested cartesian codes.)
 Moreover, \cite[Lem.~9]{RT} implies that $\gamma_q(d,m)$ can be related to the dimensions of certain Reed-Muller codes. We show below that the notion of projective reduction can also be used to give a short direct proof of the formula  of Renter\'{i}a and Tapia-Recillas for the dimension of  $\prm_q(d,m)$. 

Recall that we had defined the Reed-Muller code $\RM_q(\nu, m)$ for any $\nu\in \Z$ with $0\le \nu \le m(q-1)$. We shall find it convenient to extend this definition a bit so as to set 
$\RM_q(0,0)$ to be the $1$-dimensional code of length $1$, namely, $\Fq$.
This is  consistent with the formula \eqref{eq:dimRM}  
given earlier, namely, 
$$
\rho_q(\nu, m) := \dim \RM_q(\nu,m)
=\sum_{j=0}^{m}(-1)^j \binom{m}{j}\binom{m+\nu -jq}{m}
$$
and also with convention $\rho_q(0,0):= 1$, which was stated earlier. 

\begin{thm}\label{newdimformula}
The dimension of the code $\prm_q(d,m)$ is given by
\begin{equation}
\label{eq:deltaq}
\sum_{i=0}^m \rho_q(d-1,\, i) = \sum_{i=0}^m\sum_{j=0}^{i}(-1)^j \binom{i}{j}\binom{i+d-1 -jq}{i}   = \gamma_q(d,m).
\end{equation}
\end{thm}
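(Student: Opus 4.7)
The plan is to use Lemma~\ref{lemp3} to reduce the problem to counting projectively reduced monomials and then stratify by the position of the last variable appearing in the monomial. By Lemma~\ref{lemp3}, $\dim \prm_q(d,m) = \dim_{\Fq} C_q(d,m)$, and since $C_q(d,m)$ is spanned by the projectively reduced monomials of degree $d$, this dimension equals the number of such monomials in $\Fq[X_0,\ldots,X_m]_d$.

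Next, I would classify each projectively reduced monomial $\mu \in \Fq[X_0,\ldots,X_m]_d$ by the largest index $\ell \in \{0,1,\ldots,m\}$ such that $X_\ell$ appears in $\mu$ with a positive exponent. Directly from the definition of projective reduction, $\mu$ has the form
$$
\mu = X_0^{a_0} X_1^{a_1} \cdots X_\ell^{a_\ell}, \quad \text{with } a_\ell \ge 1,\ 0 \le a_i \le q-1 \text{ for } 0 \le i < \ell,\ \text{and } \sum_{i=0}^{\ell} a_i = d.
$$
Since $a_\ell = d - (a_0 + \cdots + a_{\ell-1}) \ge 1$, specifying such a $\mu$ is equivalent to choosing a tuple $(a_0,\ldots,a_{\ell-1}) \in \{0,1,\ldots,q-1\}^\ell$ with $a_0 + \cdots + a_{\ell-1} \le d-1$. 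The number of such tuples is exactly the number of reduced monomials in $\ell$ variables of total degree $\le d-1$, which is $\rho_q(d-1,\ell)$. (For $\ell = 0$ this count is $1$, corresponding to the single monomial $X_0^d$, consistent with the convention $\rho_q(d-1,0)=1$; and when $d-1 \ge \ell(q-1)$, every reduced tuple qualifies and one uses \eqref{eq:rho-with-min} to see that $\rho_q(d-1,\ell) = q^\ell$, matching the count directly.)

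Summing over $\ell = 0, 1, \ldots, m$ then yields
$$
\dim \prm_q(d,m) \;=\; \sum_{i=0}^{m} \rho_q(d-1, i),
$$
which is the first equality in \eqref{eq:deltaq}. The second equality in \eqref{eq:deltaq}, namely the identification with $\gamma_q(d,m)$, is immediate by substituting the explicit formula
$$
\rho_q(d-1, i) = \sum_{j=0}^{i} (-1)^j \binom{i}{j} \binom{i + d - 1 - jq}{i}
$$
from \eqref{eq:dimRM}, which is valid with the convention $\rho_q(0,0)=1$.

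I do not expect any serious obstacles here. The only small bookkeeping point is to check that the stratification by the last variable is compatible with the convention $\rho_q(d-1,0)=1$ and with \eqref{eq:rho-with-min} in the regime where $d-1$ exceeds $\ell(q-1)$; both match seamlessly. The genuine content is the observation, provided by projective reduction, that freedom in the exponent of the last variable $X_\ell$ converts the constraint "total degree exactly $d$" into the Reed-Muller-style constraint "total degree at most $d-1$" in the remaining $\ell$ variables, which is what produces the telescoping sum of Reed-Muller dimensions.
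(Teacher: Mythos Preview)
Your proposal is correct and is essentially the same as the paper's proof: both invoke Lemma~\ref{lemp3} to reduce to counting projectively reduced monomials of degree $d$, stratify by the index $\ell$ of the last variable appearing, and observe that the map $(a_0,\dots,a_\ell)\mapsto(a_0,\dots,a_{\ell-1})$ gives a bijection onto reduced exponent tuples in $\ell$ variables of degree $\le d-1$, yielding $\rho_q(d-1,\ell)$ for each stratum. Your explicit mention of the boundary case $d-1\ge \ell(q-1)$ via \eqref{eq:rho-with-min} is a nice touch that the paper leaves implicit.
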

\begin{proof}
Let $E$ be 
the set of all $(m+1)$-tuples of nonnegative integers whose sum is equal to $d$. 
For $\mathbf{a} = (a_0,\ldots,a_m)\in E$, let 
$\ell(\mathbf{a} ): = \max\{j \in\{0,1, \dots , m\} : a_j > 0\}$.
Note that this is well-defined since $d\ge 1$. 
Define 
$$
B:=\{\mathbf{a} =(a_0,\ldots,a_m)\in E:a_j<q \text{ for }0\le j <\ell (\mathbf{a}) \}. 
$$
It is clear from  Lemma \ref{lemp3} that $\dim \prm_q(d,m)=|B|$. Now observe that
$$
B= \coprod_{i=0}^m B_i \quad \text{where for } 0\le i \le m, \quad B_i:=\{ \mathbf{a} \in B: \ell (\mathbf{a} ) =i \}.
$$
For  $i\in \{0,1,\dots ,m\}$, the map $\mathbf{a} \mapsto (a_0, \dots, a_{i-1})$ is readily seen to be a bijection of $B_i$ onto the set of exponent vectors of reduced monomials in $i$ variables 
	of degree $\le d-1$; consequently, $|B_i|=\rho_q(d-1,\, i) $.
	This yields the desired formula. 
\end{proof}

Finally,  we give a direct proof of the equivalence of the dimension formulas of S\o rensen and
Renter\'{i}a--Tapia-Recillas.

\begin{prop}
	\label{pro:equalitywithnew}
	$\alpha^{ }_q(d,m) = \gamma_q(d,m)$.
\end{prop}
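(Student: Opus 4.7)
The plan is to give a bijective combinatorial proof, interpreting both sides as the cardinalities of two explicit sets of monomials in $m+1$ variables. First, applying Lemma~\ref{lem:SorComb} with $n=m+1$, $b=q-1$, $a=e$ recognizes the inner sum in \eqref{eq:SdimPRM} as $N(e,m+1,q-1)$. Hence $\alpha^{ }_q(d,m)$ counts the reduced monomials in $m+1$ variables whose total degree $e$ satisfies $1\le e\le d$ and $e\equiv d\Mod{q-1}$. On the other hand, the proof of Theorem~\ref{newdimformula} already identifies $\gamma_q(d,m)$ with $|B|$, the number of projectively reduced monomials in $m+1$ variables of degree exactly $d$. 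So it suffices to exhibit a bijection between these two sets.

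For the forward map, take a reduced monomial with exponent vector $(a_0,\ldots,a_m)$ of degree $e$, where $e\equiv d\Mod{q-1}$ and $1\le e\le d$. Write $e=d-k(q-1)$ with $k\ge 0$, and let $\ell:=\max\{j:a_j>0\}$, which is well defined since $e\ge 1$. Send $(a_0,\ldots,a_m)$ to the exponent vector
$$
(a_0,\ldots,a_{\ell-1},\; a_\ell+k(q-1),\; 0,\ldots,0).
$$
This has degree exactly $d$; its last nonzero coordinate is at position $\ell$ (since $a_\ell+k(q-1)\ge 1$); and for $0\le i<\ell$ the entry $a_i$ lies in $\{0,\ldots,q-1\}$ because the original monomial was reduced. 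Thus the image is projectively reduced of degree $d$.

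For the inverse, take a projectively reduced monomial $(b_0,\ldots,b_\ell,0,\ldots,0)$ of degree $d$ with $b_\ell\ge 1$. By definition of projective reduction we have $0\le b_i\le q-1$ for $i<\ell$, while $b_\ell$ is an arbitrary positive integer. Decompose $b_\ell=a_\ell+k(q-1)$ uniquely with $a_\ell\in\{1,\ldots,q-1\}$ and $k\ge 0$ (this is exactly the decomposition that underlies the projective reduction rule applied to $X_\ell^{b_\ell}$). Send it to $(b_0,\ldots,b_{\ell-1},a_\ell,0,\ldots,0)$, which is reduced, of degree $e=d-k(q-1)$, with $1\le e\le d$ and $e\equiv d\Mod{q-1}$. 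It is routine to check that the two maps are mutually inverse, giving the required equality $\alpha^{ }_q(d,m)=\gamma_q(d,m)$.

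The only conceptual obstacle is recognizing the right bijection: one must notice that the sole difference between a reduced and a projectively reduced monomial is the unbounded exponent at the last nonzero variable, and that the extra "slack" there, measured in units of $q-1$, is precisely what the congruence condition $e\equiv d\Mod{q-1}$ on the $\alpha$-side is tracking. Once this is spotted, the verification is immediate and no delicate binomial manipulation is required, in contrast with the more computational route used in Proposition~\ref{pro:equality}.
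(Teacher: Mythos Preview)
Your proof is correct. Both you and the paper begin by invoking Lemma~\ref{lem:SorComb} to recognize the inner sum in \eqref{eq:SdimPRM} as $N(e,m+1,q-1)$, and both ultimately identify $\gamma_q(d,m)$ with the count $|B|$ of projectively reduced monomials of degree $d$ coming from the proof of Theorem~\ref{newdimformula}. The middle step, however, is genuinely different. The paper partitions the reduced monomials of degree $e$ by their last variable $x_i$, obtaining $N(e,m+1,q-1)=\sum_{i=0}^m\big(\rho_q(e-1,i)-\rho_q(e-q,i)\big)$, and then sums over the admissible values $e=r(q-1)+s+1$ to get a telescoping sum in $r$ that collapses to $\sum_{i=0}^m\rho_q(d-1,i)=\gamma_q(d,m)$. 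You instead give a single explicit bijection between the two monomial sets, sending a reduced monomial of degree $e=d-k(q-1)$ to a projectively reduced one by adding $k(q-1)$ to the exponent of the last nonzero variable. Your bijection is precisely the combinatorial unwinding of the paper's telescoping: the index $r$ in the paper's sum corresponds to your $k$, and the cancellation in the telescope is what your inverse map encodes. Your route is arguably cleaner and more conceptual, avoiding the intermediate $\rho_q$-identities entirely; the paper's route has the minor advantage of making the connection to the Reed-Muller dimensions $\rho_q(\cdot,i)$ explicit along the way.
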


\begin{proof}
	First, note that \eqref{eq:SdimPRM} can be written in the notation of Lemma~\ref{lem:SorComb} as 
\begin{equation}
\label{eq:alphaN}
	\alpha^{ }_q(d,m) = \! \sum_{\substack{e=1 \\ e \equiv d\Mod{q-1} }}^d \! N(e, m+1, q-1). 
\end{equation}
Fix any $e\in \{1, \dots , d\}$. Then $N(e, m+1, q-1)$ is the number of reduced monomials of degree $e$ in $m+1$ variables
$x_0, \dots , x_m$. 
Given such a reduced monomial, if $x_i$ is the last variable appearing in it, then it must be of the form $\mu\, x_i^{j_i}$, where 
$1\le j_i \le q-1$ and $\mu$ is a reduced monomial in $x_0, \dots , x_{i-1}$ of degree $e-j_i$. Note that the degree $e-j_i$ of $\mu$ 
satisfies $e-q+1 \le e-j_i \le e-1$. Conversely, any reduced monomial in $x_0, \dots , x_{i-1}$ of degree $a$, where $e-q+1 \le a \le e-1$, 
would give rise (upon multiplication by $x_i^{e-a}$) to a unique reduced monomial of degree $e$ in $x_0, \dots , x_{i}$ with a positive exponent for $x_i$. Thus partitioning reduced monomials of degree $e$ in $x_0, \dots , x_m$ by the last variable appearing in them, we see that 
$$
N(e, m+1, q-1) = \sum_{i=0}^m \big( \rho_q(e-1, \, i)  -  \rho_q(e-q,\, i) \big).
$$
On the other hand, since $d = t(q-1)+s+1$ with $0\le t \le m$ and $0\le s < q-1$, the parameter $e$ in the summation in  \eqref{eq:alphaN} takes values of the form $r(q-1)+s+1$, where $r$ ranges from $0$ to $t$. Thus, 
we see that  
$$
	\alpha^{ }_q(d,m) = \sum_{r=0}^t N(r(q-1)+s+1, \, m+1, \, q-1)  = \sum_{r=0}^t \sum_{i=0}^m \big( A(r,i) - A(r-1, i)\big),
	$$
where $A(k,i):= \rho_q(k(q-1)+s, \, i)$ for $-1\le k \le t$ and $0\le i \le m$. 	Now, by interchanging the summations over $i$ and $r$, and then observing that the inner sum is telescoping, and moreover, 
$A(-1, i) = 0$, we obtain
$$
\alpha^{ }_q(d,m) =\sum_{i=0}^m A(t,i) = \sum_{i=0}^m \rho_q(d-1, \, i).  
$$
This proves that $\alpha^{ }_q(d,m) = \gamma_q(d,m)$.
\end{proof}

\section*{Acknowledgements}
We are grateful to Robert Rolland and  Michel Quercia for sharing with us a direct proof of 
the equivalence of formulas \eqref{eq:SdimPRM} and \eqref{eq:MRdimPRM}, and we thank  M.  Quercia for permitting us to reproduce his proof in the appendix below. We also thank Nathan Kaplan for his interest in this work and for bringing  \cite{KM}  and the geometric interpretation mentioned in Remark~\ref{rmk:Nathan} to our attention. Thanks are also due to Hiram Lopez and the reviewers of a preliminary version of this article for some useful remarks and pointers to relevant literature. 

\section*{Appendix: Quercia's Proof of Equivalence of Dimension Formulas}

Here we reproduce Michel Quercia's original proof of the equivalence of the formulas,  due to 
S\o rensen and Mercier-Rolland,  for the dimension of projective Reed-Muller codes. We use notations consistent with this article.

\begin{prop}[Quercia]
Let $q$ be a prime power and let $d,m$ be positive integers such that $d\le m(q-1) + 1$.  
Also, let 
 $\alpha^{ }_q(d,m)$ and $\beta_q(d,m)$ be given by \eqref{eq:SdimPRM} and \eqref{eq:MRdimPRM}, respectively.  Then $\alpha^{ }_q(d,m) = \beta_q(d,m)$. 
\end{prop}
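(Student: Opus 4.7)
The plan is to prove $\alpha_q(d,m) = \beta_q(d,m)$ by reducing both sides to a common double sum of binomial coefficients, and then applying Lemma \ref{lem:SorComb}.

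First I would rewrite $\beta_q(d,m)$. In the inner sum of \eqref{eq:MRdimPRM}, the substitution $l = j - i - 1$ reindexes $i \in \{0, \dots, j-2\}$ to $l \in \{1, \dots, j-1\}$ and converts the binomial argument $d + (i+1)(q-1) - jq$ into $d - l(q-1) - j$. Swapping the order of the $j$- and $l$-sums then yields
$$\binom{d+m}{d} - \beta_q(d,m) \;=\; \sum_{l=1}^{m} \sum_{j=l+1}^{m+1}(-1)^{j}\binom{m+1}{j}\binom{d - l(q-1) - j + m}{d - l(q-1) - j}.$$

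Next I would process $\alpha_q(d,m)$. Reparameterize the outer sum via $e = d - k(q-1)$ with $0 \le k \le t$, then set $l = k + j$, so $k = l - j$ with $l \ge j$. Since $\binom{d - l(q-1) - j + m}{d - l(q-1) - j}$ vanishes for $d - l(q-1) - j < 0$ by \eqref{binom1}, the upper bound $l \le t + j$ can be dropped without affecting the sum. Swapping the order of summation then expresses $\alpha_q(d,m)$ as a double sum with the same summand as above, but now over the region $\{(j,l) : 0 \le j \le m+1, \; j \le l\}$.

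The heart of the proof is to add the two rewritten identities. Together, the regions $\{j \le l\}$ and $\{l + 1 \le j \le m+1\}$ partition the full rectangle $\{0 \le j \le m+1, \; l \ge 0\}$, so
$$\alpha_q(d,m) + \binom{d+m}{d} - \beta_q(d,m) \;=\; \sum_{l \ge 0}\sum_{j=0}^{m+1}(-1)^j\binom{m+1}{j}\binom{d - l(q-1) - j + m}{d - l(q-1) - j}.$$
For each fixed $l$, the inner $j$-sum equals $N(d - l(q-1), m+1, 0)$ by Lemma \ref{lem:SorComb} with $b = 0$, which is $1$ if $d = l(q-1)$ and $0$ otherwise. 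Summing over $l$ gives the indicator $[q-1 \mid d]$, and together with a "diagonal" contribution coming from the $l = 0$ term (equal to $\sum_j (-1)^j\binom{m+1}{j}\binom{d-j+m}{d-j}$, evaluated separately), the right-hand side collapses to $\binom{d+m}{d}$, as desired.

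The main obstacle will be the careful bookkeeping of boundary terms: the convention $\binom{a}{b} = 0$ for $b < 0$ from \eqref{binom1}, together with the fact (noted in \eqref{binom2}) that $\binom{a-j+m}{a-j}$ cannot be replaced by $\binom{a-j+m}{m}$ when $a < j$, must be applied consistently; and one has to verify that the degenerate case when $(q-1) \mid d$ is handled by the diagonal contribution rather than producing an unwanted $+1$ discrepancy. Once these edge cases are reconciled, the algebraic identity $\alpha_q(d,m) = \beta_q(d,m)$ follows directly.
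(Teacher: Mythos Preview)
Your overall strategy coincides with the paper's combinatorial proof (Proposition~\ref{pro:equality}, not Quercia's functional argument in the Appendix): reindex both formulas so that the common summand is $\binom{d-l(q-1)-j+m}{d-l(q-1)-j}$, then collapse each full $j$-row via Lemma~\ref{lem:SorComb} with $b=0$, noting that $N(a,m+1,0)=[a=0]$. So the key idea is right. Quercia, by contrast, encodes everything as the zeroth term of a sequence and evaluates the operator $T^{d-q+1}(I-T^{-q+1})^{-1}(I-T^{-1})^{m+1}$ on $u^{(m)}$; no combinatorial lemma is used.

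However, two of your intermediate assertions are false as stated, and your final paragraph does not correctly repair them. First, the claim that ``the upper bound $l\le t+j$ can be dropped without affecting the sum'' fails precisely when $(q-1)\mid d$: at $j=0$, $l=t+1$ one has $d-l(q-1)-j=0$ and the binomial equals $1$, so the extended sum $\alpha'$ satisfies $\alpha'=\alpha_q(d,m)+[\,(q-1)\mid d\,]$, not $\alpha'=\alpha_q(d,m)$. Second, the two regions you describe do \emph{not} partition the full rectangle: your $\beta$-region is $\{1\le l\le m,\;l+1\le j\le m+1\}$, which omits the strip $l=0$, $1\le j\le m+1$. Consequently your displayed identity is off by $\binom{d+m}{d}$ on the right (coming from that missing strip, since the full $l=0$ row sums to $N(d,m+1,0)=0$ and hence the missing part equals $-\binom{d+m}{d}$). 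The correct bookkeeping is
\[
\alpha' + \binom{d+m}{d} - \beta_q(d,m) \;=\; \binom{d+m}{d} + \sum_{l\ge 0} N\big(d-l(q-1),\,m+1,\,0\big)\;=\;\binom{d+m}{d}+[\,(q-1)\mid d\,],
\]
and combining this with $\alpha'=\alpha_q(d,m)+[\,(q-1)\mid d\,]$ gives $\alpha_q(d,m)=\beta_q(d,m)$. Your ``diagonal contribution from the $l=0$ term'' is $N(d,m+1,0)=0$, so it cannot by itself produce the $\binom{d+m}{d}$ you claim; the two boundary errors cancel each other, and that is what actually makes the proof work.
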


\begin{proof}
As in the proof of Proposition \ref{pro:equality},  the assertion 
$\alpha^{ }_q(d,m)=\beta_q(d,m)$ is equivalent to the equality 
\begin{eqnarray}\label{eq:QequivdimPRM}
\sum_{j=0}^{m+1} (-1)^j {{m+1}\choose{j}} \sum_{\substack{e \equiv d\Mod{q-1} \\
0<e \le d+(j-1)(q-1)}} {{e - jq + m}\choose{e- jq }}=0.
\end{eqnarray}
Define
\begin{eqnarray}\label{eq:defc}
c:=\sum_{j=0}^{m+1} (-1)^j {{m+1}\choose{j}} \sum_{\substack{e \equiv d\Mod{q-1} \\ -\infty<e\le d+(j-1)(q-1)}}{{e- jq + m}\choose{e- jq }}.
\end{eqnarray}
Observe that the expression on the left in  \eqref{eq:QequivdimPRM} is equal to $c-1$ if $d$ is a multiple of $q-1$, and is equal to $c$ otherwise. Thus it is enough to show that $c=1$ if $d$ is a multiple of $q-1$ and $c=0$ if $d$ is not a multiple of $q-1$. We use 
a functional approach to prove this. 

Let $\mathcal{E}$ be the $\Z$-module of the sequences $u=(u_n)_{n\in \Z}$ which take integer values and vanish in the vicinity of $-\infty$ (i.e., for each $u=(u_n)_{n\in \Z} \in \mathcal{E}$,  there is $n_0\in \Z$ such that 
$u_n =0$ for all $n \in \Z$ with $n< n_0$). Denote by $I$ the identity mapping of $\mathcal{E}$ and $T$ the transfer operator defined by
$$
(Tu)_n=u_{n+1} \quad \text{for }n\in\Z.
$$
The sequence $u^{(m)}$, defined by
$$
(u^{(m)})_n=\binom{n+m}{n} \quad \text{for }n\in\Z,
$$
belongs to $\mathcal{E}$ and the number $c$ as defined in \eqref{eq:defc} is the term with index $0$ of the sequence $x=Lu^{(m)}$ where 
$$
L=\sum_{j=0}^{m+1} (-1)^j {{m+1}\choose{j}} \sum_{\substack{e \equiv d\Mod{q-1} \\ -\infty<e \le d+(j-1)(q-1)}} T^{e-jq}.
$$ 
Note that the above series has a value as an $\mathcal{E}$-endomorphism. Define 
$$
\Delta:=\sum_{k\le 0} T^{k(q-1)}=(I-T^{-q+1})^{-1}.
$$ 
Then
\begin{eqnarray*}
&L&=\sum_{j=0}^{m+1} (-1)^j {{m+1}\choose{j}} T^{d+(j-1)(q-1)-jq}\Delta\\
&&=\sum_{j=0}^{m+1} (-1)^j {{m+1}\choose{j}} T^{d-j-q+1}\Delta\\
&& =T^{d-q+1}\Delta (I-T^{-1})^{m+1}
\end{eqnarray*}
On the other hand, $(I-T^{-1})u^{(m)}=u^{(m-1)}$. This implies that
\begin{eqnarray*}
(I-T^{-1})^{m+1}u^{(m)}=u^{(-1)},
\end{eqnarray*}
where $u^{(-1)}$ is the sequence belonging to $\mathcal{E}$ defined by
$$
(u^{(-1)})_0=1 \quad \text{and} \quad (u^{(-1)})_n=0 \text{ if $n\in \Z$ with }n\ne 0.
$$
Thus $x=Lu^{(m)}=T^{d-q+1}\Delta u^{(-1)}$ is given by
$$
x_n=\begin{cases}
1 & \text{if }n \equiv -d (\text{mod }q-1) \text{ and } n > -d, \\
0 & \text{otherwise}.
\end{cases}
$$
Considering $x_0$, we obtain the desired result.
%
\end{proof}

\end{document}